\documentclass[12pt,fleqn]{article}

\usepackage{natbib}
\usepackage{vmargin,setspace}
\usepackage{tabu}
\usepackage{amssymb,amsmath}

\usepackage{txfonts}

\usepackage[colorlinks=true, pdfstartview=FitV, citecolor=blue!50!black, linkcolor=red!30!black]{hyperref}

\usepackage{graphicx,graphics,tikz}
\usepackage{microtype}

\newcounter{llst}
\newenvironment{abet}{\begin{list}{\rm (\alph{llst})}{\usecounter{llst}
\setlength{\itemindent}{0em} \setlength{\leftmargin}{3em}
\setlength{\labelwidth}{2em} \setlength{\labelsep}{1em}}}{\end{list}}
\newenvironment{numm}{\begin{list}{\rm (\roman{llst})}{\usecounter{llst}
\setlength{\itemindent}{0em} \setlength{\leftmargin}{3.5em}
\setlength{\labelwidth}{2.5em} \setlength{\labelsep}{1em}}}{\end{list}}

\newtheorem{theorem}{Theorem}[section]

\newtheorem{corollary}[theorem]{Corollary}

\newtheorem{definition}[theorem]{Definition}
\newtheorem{expl}[theorem]{Example}

\newtheorem{proposition}[theorem]{Proposition}
\newtheorem{rmrk}[theorem]{Remark}

\newenvironment{proof}[1][Proof]{\noindent \textbf{#1.} }{\hfill
\rule{0.5em}{0.5em}}

\newenvironment{example}{\begin{expl} \rm}{\hfill $\blacklozenge$ \end{expl}}{}
{}

\setpapersize{A4}
\setmarginsrb{84pt}{64pt}{84pt}{72pt}{12pt}{10pt}{12pt}{30pt}

\begin{document}

\author{Owen Sims\thanks{Queen's University Management School, Riddel Hall, 185 Stranmillis Road, Belfast BT9~5EE, UK. Email: \href{mailto:osims01@qub.ac.uk}{osims01@qub.ac.uk}} \and Robert P. Gilles\thanks{Queen's University Management School, Riddel Hall, 185 Stranmillis Road, Belfast BT9 5EE, UK. Email: \href{mailto:r.gilles@qub.ac.uk}{r.gilles@qub.ac.uk}} }

\title{\textbf{Critical Nodes in Directed Networks}}

\date{January 2014}

\maketitle

\begin{abstract}
\singlespace\noindent
Critical nodes---or ``middlemen''---have an essential place in both social and economic networks when considering the flow of information and trade. This paper extends the concept of critical nodes to directed networks. We identify strong and weak middlemen.

Node contestability is introduced as a form of competition in networks; a duality between uncontested intermediaries and middlemen is established. The brokerage power of middlemen is formally expressed and a general algorithm is constructed to measure the brokerage power of each node from the networks adjacency matrix. Augmentations of the brokerage power measure are discussed to encapsulate relevant centrality measures.

We use these concepts to identify and measure middlemen in two empirical socio-economic networks, the elite marriage network of Renaissance Florence and Krackhardt's advice network.
\end{abstract}

\noindent
\textbf{Keywords:} socio-economic networks, critical nodes, network analysis, contestability

\thispagestyle{empty}

\newpage

\setcounter{page}{1} \pagenumbering{arabic}

\section{The critical role of middlemen in social networks}

Recent research has shown that social networks are recognised as important descriptors of real-life social processes and their analysis as very insightful \citep{Watts2004,Jackson2008,Newman2010}. Node centrality aims to identify the most influential nodes in the network.

This paper makes a contribution to the analysis of nodes that are critical for the flow of information and trade in a network. Such critical nodes---or ``middlemen''---have been solely considered for undirected networks. Here, we consider these nodes in the general context of directed networks. We introduce a middleman as a node that can block the information flow from at least one node to another. If one applies this definition to undirected networks, one arrives at the standard notion of a middleman as a singleton node cut set. In the context of directed networks this is no longer the case: Its removal does not necessarily break up the whole network; it just compromises the information flow for at least one pair of nodes.

In directed networks, the essential brokerage position of middlemen allows them to be highly extractive to both directly and indirectly connected nodes. On the other hand, their involvement lowers transaction costs in terms of search and the formation of beneficial links \citep{Burt1992}. Examples include financial intermediaries which lower the cost of credit intermediation; Google which lowers the cost of Internet search and matching through self-selection; and eBay lowering the cost of economic trade.

Naturally, the existence of middlemen in a directed network is closely related to the competitive environment represented by that network. \citet{GillesDiam2013} show in a very simple setting that middlemen have the potential to be highly exploitive given a lack of competition combined with a pessimistic outlook regarding their potential contestability in a network. The main conclusion from this research is a non-trivial extension of how economists and social scientists perceive the architecture and dynamics of exchange systems, how the presence of a middleman can knit a system together, and, as a consequence of their position, can act as rent-extracting monopolists with excessive bargaining power \citep[Chapter 11]{EasleyKleinberg2010}.

We extend these notions to a more general definition of contestability in directed networks. In a directed network, a node is contested if alternative pathways (``walks'') are available to establish exchange between pairs of nodes in the network. Our main result shows that there is a formal duality between the existence of middlemen and network contestability. In particular, an intermediary node is a middleman if and only if it is uncontested.

One would expect that betweenness centrality measures would capture the influence that such middlemen exert. However, we show that this is not the case. Instead, we propose a new \emph{middleman power measure} that exhibits the desired properties. We devise an algorithm that applies the adjacency matrix of any directed network to identify and measure the middlemen in that network. We apply this algorithm to study two very well known (historical) directed networks from the literature. These applications allow us to make an in-depth comparison with well-established centrality measures such as betweenness centrality and Bonacich centrality \citep{Bonacich1987}.

In the Florentine marriage network of the early renaissance \citep{Padgett1993,Padgett1994}, we conclude that our middleman power measure clearly ranks the more powerful middlemen higher than the less powerful, confirming with historical analysis. In \citet{Krackhardt1987}'s advice network, again our measure confirms Krackhardt's original assessment of the most influential node. Furthermore, we show that, despite the importance of middlemen in these networks, this positional feature is not properly and fully identified by conventional centrality measures.

\paragraph{Relationship with the existing literature.}

Middlemen have been investigated in both sociology and economics during the past three decades. The specific interest in economics was initiated by \citet{KalaiMiddlemen1978} who investigated the payoffs to middlemen in an intermediated trade environment. These insights were extended by \citet{RubinsteinWolinsky1987} who applied them to a model of market search. \citet{JacksonWolinsky1996} and \citet{GillesChakrabarti2006} further analysed middlemen in the context of economic interaction.

Many of the general findings in the economics literature have been independently verified in the social sciences. In social network analysis, middlemen are seen as social actors that bridge structural holes in the social fabric. Due to their position these middlemen have access to more diverse information; are able to broker information and ideas between agents and affiliations; will have a tendency to be more entrepreneurial; are able to exploit their intermediating opportunities; and have better ideas which are then evaluated by others within a society or organisation \citep{Burt1992, Burt2004,Burt2005,Burt2010}. Moreover, by virtue of their weak ties in the network, middlemen are able to exploit unique opportunities that are not afforded to other, more socially embedded, nodes in the network \citep{Granovetter1973, Granovetter2005}.

Despite the wide acknowledgement within social network analysis of the significance of middlemen, centrality measures do not necessarily identify these critical nodes as being important even though their removal may deteriorate the functionality of the network as a whole. As the notion of centrality came to the fore, \citet[p.~219]{Freeman1979} argued that central nodes were those ``in the thick of things''. To exemplify this, he used an undirected star network consisting of five nodes. The middle node, at the centre of the star, has three advantages over the other nodes: It has more ties; it can reach all the others more quickly; and it controls the flow between the others. Freeman developed three simple measures of node centrality based on the prevalent features of the centre node: degree, closeness and betweenness.

More complex centrality measures have been proposed. \citet{Bonacich1972} proposed the use of the largest eigenvalue of the adjacency matrix as the basis for the measurement of network centrality. This measure expresses the idea that a node is more central if it is connected to nodes that are themselves central. Other measures, including Katz centrality \citep{Katz1953} take advantage of the same mechanism as Bonacich centrality by analysing the number of all nodes that a given node can be connected through a path, while contributions from linking distant nodes are progressively penalised. A node has a larger centrality if it has neighbours with a high Katz centrality. PageRank \citep{BrinPage1998} is a variant of Katz centrality, which recently gained application to the World Wide Web.

The most common centrality measures do not provide a proper account of network power and, usually, overlook middlemen. Two exceptions are noted: First, the $\beta$-measure \citep{BrinkGilles1994, BrinkGilles2000} is founded on an axiomatic approach to measuring centrality. This measure is one of a few that measures dominance, and therefore power, in directed networks. Applications are mainly focussed on how nodes are dominated by others in hierarchical organisations. Second, \citet{Bozzo2013} provide both a vulnerability and power measure based on the number of nodes that can be victimised by so-called ``executioners''. The power of a node set is comparable to how integral the set is to the robustness of an undirected network.

\paragraph{Outline of the paper.}

This paper is structured as follows. Section 2 provides the required preliminaries for our analysis and introduces two strong and weak middlemen in directed networks. Section 3 considers the notion of network contestability, which is shown to be the dual notion to that of middlemen. Section 4 provides a measure of middleman power, which assigns a quantitative expression to a node's brokerage power; non-middlemen always attain a zero network power score. The final section investigates two empirical case studies of social networks where middlemen have tended to have a degree of importance.

\section{Middlemen as critical nodes}

Our initial discussion focusses on the definition of a critical node in a directed network. We prefer to denote critical nodes as \emph{middlemen}. We introduce middlemen in relation to the connectivity in a network. We identify two related notions of middlemen, denoted as strong and weak middlemen depending on their network position.

\subsection{Network preliminaries}

A \emph{directed network} is a pair $(N,D)$ where $N = \{1,2, \ldots ,n\}$ is a finite set of \emph{nodes} and $D \subset \{ (i,j) \mid i,j \in N$ and $i \neq j \}$ is a set of \emph{arcs}, being directed relationships from one node to another. Note that $(i,i) \notin D$ for any $i \in N$, which is a technical requirement.\footnote{An \emph{undirected} network can be interpreted as a directed network $(N,D)$ such that all arcs are reciprocated: $(i,j) \in D$ if and only if $(j,i) \in D$.} We denote a directed network $(N,D)$ by $D$ unless $N$ is ambiguous. Also, we denote $(i,j) \in D$ by $ij$. We next introduce some auxiliary concepts in directed networks.

\paragraph{Walks:}

A \textit{walk} from $i$ to $j$ in a directed network $D$---or, simply, a $(i,j)$-\emph{walk}---is a set of connected nodes $W_{ij} (D) = \{ i_{1}, \ldots ,i_{m} \} \subset N$ with $m \geq 3$, $i_1 =i$, $i_m =j$, and $i_{k}i_{k+1} \in D$ for every $k=1, \ldots ,m-1$. Therefore, a walk is a sequence of adjacent nodes in the network. Walks might revisit nodes and, therefore, might contain loops.

In many cases there are multiple walks from $i$ to $j$ in a directed network $D$. If this is required, we denote $W_{ij}^{v}(D)$ as the $v^{\mbox{th}}$ distinct walk from $i$ to $j$ in $D$.

The class $\mathcal{W}_{ij}(D)= \{ W_{ij}^{1}(D), \ldots ,W_{ij}^{V}(D) \}$ consists of all distinct walks from $i$ to $j$ in $D$, where $V$ is the number of distinct walks. $\mathcal{W}_{ij}(D)= \varnothing$ denotes that there is no walk from $i$ to $j$ in $D$.

\paragraph{Connectedness:}

Nodes $i,j \in N$ are \textit{strongly connected} if $\mathcal{W}_{ij}(D) \neq \varnothing$ as well as $\mathcal{W}_{ji}(D) \neq \varnothing$. A directed network $D$ is \emph{strongly connected} if $\mathcal{W}_{ij} \neq \varnothing$ and $\mathcal{W}_{ji} \neq \varnothing$ for all nodes $i,j \in N$.

Node $i$ is \textit{weakly connected to} $j$ if $\mathcal{W}_{ij}(D) \neq \varnothing$. A directed network $D$ is weakly connected if for all $i,j \in N$ either $i$ is weakly connected to $j$, or $j$ is weakly connected to $i$, or both. Clearly, strongly connected networks are always weakly connected.

A subset $M \subset N$ is a \emph{weakly connected component} of $D$ if $(M, D_M)$ with $D_M = D \cap (M \times M)$ is weakly connected and and there is no $i \in N \setminus M$ with $(M+i, D_{M+i})$ being weakly connected.

\paragraph{Successors and predecessors:}

If $\mathcal{W}_{ij} (D) \neq \varnothing$ then $j$ is called the \textit{successor} of $i$ and $i$ is called the \textit{predecessor} of $j$ in $D$.

For $i \in N$ we define $s_{i}(D) = \{ j \in N \mid (i,j) \in D \}$ being all of the \textit{direct successors} of $i$ in $D$, and $p_{i}(D)=\{j \in N \mid (j,i) \in D\}$ being all the \textit{direct predecessors} of $i$ in $D$. The \emph{out-degree} of $i$ is given by $d_{i}^{+} = \# s_{i}(D)$ and its \emph{in-degree} by $d_{i}^{-}= \# p_{i}(D)$. Now, $i$'s overall degree is given by $d_{i} = d^+_i + d^-_i \equiv \# \{ s_{i} (D) \cup p_{i} (D) \}$.

We introduce $S_{i}(D)= \{j \in N \mid \mathcal{W}_{ij}(D) \neq \varnothing \}$ as $i$'s \textit{successor set}. Node $k$ is an \emph{indirect successor} of $i$ in $D$ if $k \in S_{i}(D) \setminus s_i (D)$. Furthermore, we introduce a modified successor set by $\overline{S}_{i}(D) = S_{i}(D) \cup \{i\}$.

Likewise, $P_{i}(D)=\{j \in N \mid \mathcal{W}_{ji}(D) \neq \varnothing \}$ denotes $i$'s \textit{predecessor set}. Again, node $k$ is an \emph{indirect predecessor} of $i$ in $D$ if $k \in P_{i}(D) \setminus p_{i}(D)$. Analogous to the above, we let $\overline{P}_{i}(D) = P_{i}(D) \cup \{i\}$.

We can also represent the position of a node in the network through a pair-based description. The \emph{coverage} of node $i$ in $D$ is the set of all pairs of nodes that $i$ intermediates, i.e., all pairs $(j.h) \in P_i (D) \times S_i (D)$. Similarly, the \emph{reach} of node $i$ in $D$ is defined by all pairs $(i,j)$ with $j \in S_i (D)$.

The node set $N$ can be partitioned into three disjoint subsets: sources, sinks, and intermediaries. Node $i$ is a \emph{source} if $d_{i}^{-} = 0$ and $d_{i}^{+} \geqslant 1$; $i$ is a \emph{sink} if $d_{i}^{-} \geqslant 1$ and $d_{i}^{+} = 0$; and $i$ in an \emph{intermediary} if $d_{i}^{-} \geqslant 1$ and $d_{i}^{+} \geqslant 1$.

Finally, we let $D-i$ represent the network obtained by deleting the node $i$ from the network $D$. This is equivalent to
\begin{equation}
D - i = \{ (j,h) \mid j,h \in N \setminus \{ i \} \mbox{ such that } (j,h) \in D \, \}.
\end{equation}

\subsection{Connectivity and middlemen}

We identify \textit{critical nodes} as those having the ability to broker information flows in a network. Following \citet{GillesChakrabarti2006}, a critical node in an undirected network can disrupt and manipulate the typical operations on a network by disconnecting two or more nodes---which is equivalent to the property that removing a critical node partitions a given network into multiple disconnected components.

Here, we extend this concept to directed networks. In this context our definition focusses on the disruption of connectivity of two nodes.
\begin{definition} \label{middleman}
Let $(N,D)$ be a directed network and let $i,j,h \in N$ be three distinct nodes.
\begin{abet}
\item The node $h$ is an \textbf{$(i,j)$-middleman} if $\mathcal{W}_{ij} (D) \neq \varnothing$ and
\begin{equation} \label{ijmiddleman}
h \in \cap \mathcal{W}_{ij}(D) = W_{ij}^{1}(D) \cap  \cdots  \cap W_{ij}^{V}(D) ,
\end{equation}
where $V \geqslant 1$ is the number of walks from $i$ to $j$. Here, $M_{ij}(D) \equiv \cap \mathcal{W}_{ij}(D)$ denotes the \textbf{$(i,j)$-middleman set} containing all $(i,j)$-middlemen.

\item The \textbf{middleman set} for $D$ is the set of all middlemen in $D$ given by
\begin{equation} \label{middlemanseteq}
M(D) = \bigcup_{i,j \in N \colon i \neq j} M_{ij}(D)
\end{equation}
Node $h$ is a \textbf{middleman} if $h \in M(D)$ and $h$ is a \textbf{non-middleman} if $h \notin M(D)$.
\end{abet}
\end{definition}
A middleman is an intermediary node that is a member of \emph{all} walks between at least two other nodes within a given network. Therefore, a middleman brokers the flow of all information between at least two nodes. Obviously, a middleman connects two or more agents that would not be connected otherwise. Indeed, by definition $M_{ij} (D) \neq \varnothing$ if and only if $\mathcal{W}_{ij} (D) \neq \varnothing$.

Converse to a middleman, a non-middleman is an intermediary that if removed from the network does not affect the connectivity of any two or more other nodes and, therefore, does not impose a negative externality on other nodes when removed.

The following properties can be deduced directly from the above definition.
\begin{proposition}\label{TheoremIntermediary}
Let $D$ be a directed network on the node set $N = \{ 1, \ldots ,n \}$.
\begin{numm}
\item Every middleman $i \in M(D)$ is an intermediary in $D$. 

\item If $n < 3$, there might exist intermediaries, but there are no middlemen.

\item For $n \geqslant 3$ and $i \in N$, $p_{i}(D) \subset \left[ p_{j}(D) \cup \{ j \} \, \right]$ for every direct successor $j \in s_{i}(D)$ implies that $i \notin M(D)$.

\item For all $i,j \in N$ with $j \in s_{i}(D)$ it holds that $M_{i,j}(D) = \varnothing$.
\\
This implies that the complete directed network has no middlemen.

\item Every middleman $i \in M(D)$ has a local clustering co-efficient of less than $1$.

\item If $D$ is undirected in the sense that $(i,j) \in D$ if and only if $(j,i) \in D$, then $M_{ij} (D) = M_{ji} (D)$ for all $i,j \in N$.
\end{numm}
\end{proposition}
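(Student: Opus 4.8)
The plan is to treat the six claims largely independently, establishing the structural parts (i), (ii) and (iv) first, since the deeper claims lean on them. For (i) I would unfold the definition: if $i \in M(D)$ then $i \in M_{ab}(D)$ for some pair $a \neq b$ with $a,b \neq i$, so $i$ lies on a walk from $a$ to $b$ as an \emph{interior} node; its in-walk predecessor supplies an incoming arc and its in-walk successor an outgoing arc, giving $d_i^- \geq 1$ and $d_i^+ \geq 1$, i.e. $i$ is an intermediary. For (ii) I would note that the middleman definition invokes three distinct nodes $i,j,h$, which is impossible when $n < 3$; a two-node $2$-cycle then exhibits intermediaries while admitting no middlemen. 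For (iv) the key observation is that a direct arc $i \to j$ furnishes the minimal walk whose node set lies in $\{i,j\}$, so the intersection $\cap \mathcal{W}_{ij}(D)$ can contain no third node and $M_{ij}(D) = \varnothing$; since every ordered pair in the complete network is joined by a direct arc, $M(D) = \varnothing$ there.

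The crux is part (iii), which I would attack by contraposition using a local rerouting (a ``bypass'') of walks. Assume $i \in M_{ab}(D)$ for some $a \neq b$ and choose a minimum-length $(a,b)$-walk $W$; such a walk repeats no node (excising any repeated sub-loop only shortens it), so it is a path visiting $i$ exactly once. Write the relevant fragment of $W$ as $q \to i \to j$, where $q \in p_i(D)$ is $i$'s predecessor and $j \in s_i(D)$ its successor along $W$; since $W$ is a path, $q \neq j$. The hypothesis $p_i(D) \subseteq p_j(D) \cup \{j\}$ then forces $q \in p_j(D)$, i.e. $q \to j \in D$. Replacing the fragment $q \to i \to j$ by the single arc $q \to j$ yields a walk from $a$ to $b$ omitting $i$, contradicting $i \in \cap \mathcal{W}_{ab}(D)$. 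The one delicate case is when $W$ is already $a \to i \to b$: the bypass collapses to the arc $a \to b$, which by the hypothesis does belong to $D$, whence part (iv) gives $M_{ab}(D) = \varnothing$ and the contradiction again follows --- this is precisely why (iv) is established first. I expect the main obstacle to lie in making this rerouting rigorous, in particular in justifying the reduction to a non-repeating shortest walk and in dispatching the short edge case cleanly.

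Finally, I would obtain (v) as a corollary of the contrapositive of (iii): for $i \in M(D)$ (so $n \geq 3$) the hypothesis of (iii) must fail, producing a predecessor $q \in p_i(D)$ and a successor $j \in s_i(D)$ with $q \neq j$ and $q \to j \notin D$ --- an ``unclosed'' pair of neighbours of $i$. Under the standard directed local clustering coefficient this missing arc strictly lowers the realised fraction of closed triads through $i$ below its maximum, so the coefficient is $< 1$; here I would pin down the precise clustering definition in force and verify that the absent $q \to j$ indeed reduces the numerator. For (vi) I would exploit the symmetry $(i,j) \in D \iff (j,i) \in D$ through the reversal map sending a walk $(i = w_1, \ldots, w_m = j)$ to $(w_m, \ldots, w_1)$: symmetry makes every reversed step a legal arc, so reversal is a bijection between $\mathcal{W}_{ij}(D)$ and $\mathcal{W}_{ji}(D)$ that preserves each walk's node set. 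Consequently the two families share the same collection of vertex sets, their intersections coincide, and $M_{ij}(D) = M_{ji}(D)$.
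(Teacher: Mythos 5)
Your proposal is correct, and in fact it supplies proofs that the paper itself omits: the paper merely asserts that these six properties ``can be deduced directly from the above definition,'' so there is no authorial argument to diverge from. Your decomposition --- (i), (ii), (iv) first, then (iii) by contraposition via a shortest-walk bypass, then (v) as a corollary of the contrapositive of (iii), and (vi) by walk reversal --- is the natural route, and the details check out: a minimum-length $(a,b)$-walk is node-repetition-free, the fragment $q \to i \to j$ has $q \neq j$ on a path, the hypothesis $p_i(D) \subseteq p_j(D) \cup \{j\}$ then yields the arc $q \to j$, and your use of (iv) to dispatch the collapsed case $W = a \to i \to b$ is exactly the right move. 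Two remarks. First, your arguments for (iv) and for the collapsed bypass tacitly treat a direct arc $i \to j$ as a walk, whereas the paper's formal definition of a walk requires $m \geqslant 3$ nodes; under that literal reading (iv) would actually be \emph{false} (with arcs $i \to j$, $i \to h$, $h \to j$ and nothing else, the only $m \geqslant 3$ walk from $i$ to $j$ is $\{i,h,j\}$, making $h$ an $(i,j)$-middleman despite $j \in s_i(D)$), so your implicit $m \geqslant 2$ convention is the only one consistent with the proposition and with the paper's own usage (e.g.\ Table~1 lists direct successors inside $S_i(D)$); this is worth flagging explicitly rather than assuming silently. Second, the caveat you raise in (v) is dischargeable: the contrapositive of (iii) hands you distinct neighbours $q \in p_i(D)$, $j \in s_i(D)$ with $q \to j \notin D$, so under any standard directed local clustering coefficient (the realised fraction of ordered arcs among $i$'s neighbourhood) the numerator falls strictly short of the maximum, and since $q \neq j$ the node $i$ has the two distinct neighbours needed for the coefficient to be well defined.
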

From (vi), in an undirected network a node is a middleman if it rests on all walks from node $i$ to node $j$. This means that the removal of a middleman disrupts the any communication between nodes $j$ and $i$. Hence, a middleman indeed is a singleton cut set in an undirected network, as traditionally understood.

On the other hand, in a directed network, a middleman rests on all walks from node $i$ to node $j$, but does not have to rest on all walks from $j$ to $i$. This implies that given the removal of a middleman in a directed network the interaction from $i$ to $j$ will become disconnected, but the interaction from node $j$ to node $i$ may still be present. Indeed, even with the removal of a middleman in a directed network, nodes $i$ and $j$ can still remain weakly connected. This insight motivates a further refinement of the notion of a middleman in a directed network.
\begin{definition} \label{strongweakmiddlemen}
Let $D$ be a weakly connected directed network on node set $N = \{ 1, \ldots ,n \}$ with $n \geqslant 3$.
\begin{abet}
\item A middleman $h \in M(D)$ is a \textbf{strong middleman} in $D$ if the network $D - h$ contains two or more weakly connected components.

\item A middleman $h \in M(D)$ is a \textbf{weak middleman} in $D$ if network $D-h$ does not contain two or more weakly connected components.
\end{abet}
\end{definition}
Weak middlemen exist in both cyclic and acyclic directed networks. Example~\ref{identifyingmiddlemen} highlights the existence of both weak and strong middlemen in an acyclic network.

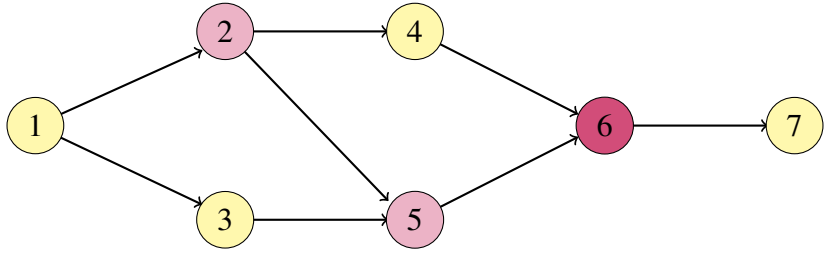
\begin{figure}[t]
\begin{center}
\begin{tikzpicture}[scale=0.5]
\draw[thick, ->] (0,2.5) -- (4.4,4.5);
\draw[thick, ->] (0,2.5) -- (4.4,0.4);
\draw[thick, ->] (5,5) -- (9.3,5);
\draw[thick, ->] (5,0) -- (9.3,0);
\draw[thick, ->] (5,5) -- (9.3,0.5);
\draw[thick, ->] (10,5) -- (14.3,2.8);
\draw[thick, ->] (10,0) -- (14.3,2.2);
\draw[thick, ->] (15,2.5) -- (19.3,2.5);

\draw (0,2.5) node[draw,circle,fill=yellow!40] {$1$};
\draw (5,5) node[draw,circle,fill=purple!30] {$2$};
\draw (5,0) node[draw,circle,fill=yellow!40] {$3$};
\draw (10,5) node[draw,circle,fill=yellow!40] {$4$};
\draw (10,0) node[draw,circle,fill=purple!30] {$5$};
\draw (15,2.5) node[draw,circle,fill=purple!70] {$6$};
\draw (20,2.5) node[draw,circle,fill=yellow!40] {$7$};

\end{tikzpicture}
\end{center}
\caption{Acyclic directed network, $D$, highlighting strong, weak, and non-middlemen.}
\label{weakmm}
\end{figure}

\begin{example} \label{identifyingmiddlemen}
Consider a directed network $D$ on the node set $N=\{1,2,3,4,5,6,7\}$, which is depicted in Figure~\ref{weakmm}. In the network, node $1$ is a source, node $7$ is a sink, and nodes 2,3,4,5, and 6 are intermediaries.
\\
We can determine that $M(D)=\{2,5,6\}$, where nodes 2 and 5 are weak middlemen and node 6 is a strong middleman. For an explanation first consider node 2. Node 2 lies on all walks from node 1 to node 4, therefore $2 \in M_{1,4}(D)$ and $\mathcal{W}_{1,4}(D-2) = \varnothing$. However, $D-2$ remains weakly connected meaning that node 2 must be a weak middleman. An analogous argument could be made for node 5 because $5 \in M_{3,6}(D) \cap M_{3,7}(D)$ and, therefore, $\mathcal{W}_{3,6}(D-5) = \mathcal{W}_{3,7}(D-5) = \varnothing$. Moreover, $D-5$ is weakly connected.
\\
Finally, $6 \in M_{1,7}(D) \cap M_{2,7}(D) \cap M_{3,7}(D) \cap M_{4,7}(D) \cap M_{5,7}(D)$. Indeed, node 6 is the sole broker of all interaction to node 7. In the network $D-6$ there emerge two weakly connected components: $\{ 1,2,3,4,5 \}$ and $\{ 7 \}$.
\\
All other intermediaries are non-middlemen. Indeed, even the removal of either node 3 or 4 does not affect the connectivity in the network.\footnote{It is worth noting that if all arcs were reciprocated in network $D$ from Figure~\ref{weakmm} to form an undirected network, nodes 2 and 5 would no longer be middlemen. However, node 6 would still be a strong middleman and would subsequently broker twice as many interactions, i.e. all connections to node 7 and all connections from node 7 to the rest of the network.}
\end{example}
Theorem~\ref{undirectedmiddlemen} below naturally leads from the insights resulting in Definition~\ref{strongweakmiddlemen} and Example~\ref{identifyingmiddlemen}.
\begin{theorem} \label{undirectedmiddlemen}
Every middleman in an undirected network is a strong middleman.
\end{theorem}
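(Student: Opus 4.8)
The plan is to show that deleting a middleman from an undirected network splits it into at least two weakly connected components, which is exactly the defining property of a strong middleman in Definition~\ref{strongweakmiddlemen}(a). So I would take $D$ undirected and weakly connected on $N$ with $n \geqslant 3$, and fix $h \in M(D)$. By the definition of the middleman set in \eqref{middlemanseteq}, there exist distinct nodes $i,j \in N \setminus \{h\}$ with $h \in M_{ij}(D)$; in particular $\mathcal{W}_{ij}(D) \neq \varnothing$ and $h$ lies on every $(i,j)$-walk. It then suffices to prove that $i$ and $j$ lie in different weakly connected components of $D-h$, since this yields two or more components and hence strong middleman status.

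The one feature of undirected networks I would exploit is the symmetry of walks: since $(a,b) \in D$ if and only if $(b,a) \in D$, reversing any walk turns an $(i,j)$-walk into a $(j,i)$-walk and conversely, so $\mathcal{W}_{ij}(D') \neq \varnothing$ exactly when $\mathcal{W}_{ji}(D') \neq \varnothing$ for every subnetwork $D'$. This is precisely Proposition~\ref{TheoremIntermediary}(vi), which gives $M_{ij}(D) = M_{ji}(D)$, so $h$ lies on every $(j,i)$-walk as well as on every $(i,j)$-walk in $D$. As a preliminary I would also note, using Proposition~\ref{TheoremIntermediary}(iv), that $i$ and $j$ are not directly adjacent: if $(i,j) \in D$ then $j \in s_i(D)$, forcing $M_{ij}(D) = \varnothing$ and contradicting $h \in M_{ij}(D)$; symmetry rules out $(j,i) \in D$ as well.

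The core is then a short contradiction. Suppose $i$ and $j$ lay in a common weakly connected component $M$ of $D-h$. By the definition of weak connectedness applied to $(M, D_M)$, there would be either an $(i,j)$-walk or a $(j,i)$-walk lying inside $D_M \subseteq D-h$. Such a walk is, in particular, a walk in $D$ that never visits $h$. But $h \in M_{ij}(D) = M_{ji}(D)$ says $h$ sits on every $(i,j)$-walk and on every $(j,i)$-walk of $D$, so no such walk can avoid $h$ --- a contradiction. Hence no component of $D-h$ contains both $i$ and $j$; as both survive the deletion (being distinct from $h$), $D-h$ has at least two weakly connected components and $h$ is a strong middleman.

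Because the argument is this direct, there is no genuine obstacle beyond careful bookkeeping with the walk definition; the real content lies in contrasting the undirected case with the directed one. In a directed network $h$ may annihilate all $(i,j)$-walks while leaving some $(j,i)$-walk intact, so $i$ and $j$ remain weakly connected after deleting $h$ and $h$ is merely a weak middleman --- exactly the behaviour of nodes $2$ and $5$ in Example~\ref{identifyingmiddlemen}. The entire force of the theorem therefore rests on the walk-reversal symmetry of Proposition~\ref{TheoremIntermediary}(vi): once ``$h$ blocks one direction'' is upgraded to ``$h$ blocks both directions,'' the separation of $i$ from $j$ in $D-h$ follows automatically.
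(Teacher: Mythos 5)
Your proof is correct and follows essentially the same route as the paper's: both hinge on the walk-reversal symmetry of undirected networks (Proposition~\ref{TheoremIntermediary}(vi)) to upgrade ``$h$ blocks all $(i,j)$-walks'' to ``$h$ blocks both directions,'' and then conclude that $i$ and $j$ must lie in distinct weakly connected components of $D-h$. Your version is slightly more careful than the paper's---making the contradiction explicit and ruling out direct adjacency of $i$ and $j$ via Proposition~\ref{TheoremIntermediary}(iv)---but the core argument is identical.
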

\begin{proof}
Without loss of generality we can restrict ourselves to connected undirected networks only. First note that every connected undirected network is a strongly connected directed network $D$ with $ij \in D$ if and only if $ji \in D$ and there exists at least one walk from $i$ to $j$ for all $i \neq j$.
\\
Therefore, consider a strongly connected directed network $(N,D)$ where $\# N =n \geqslant 3$. According to Definition~\ref{middleman}, a node $h \in N$ is a middleman if it rests on all walks between at least two other nodes, say $i$ and $j$. Since $W_{ij}(D) = W_{ji}(D)$, the property that $h \in \bigcap_{i,j \in N} \mathcal{W}_{ij}(D)$ implies that $h \in \bigcap_{i,j \in N} \mathcal{W}_{ji}(D)$.
\\
Thus, in $D-h$ all walks from $j$ to $i$ are disconnected. This in turn implies that $i$ and $j$ cannot be weakly connected and $D-h$ must contain at least two weakly connected components, separating $i$ and $j$ in different components. This implies that $h$ is actually a strong middleman in $D$.
\end{proof}

\medskip\noindent
It should be intuitive why Theorem~\ref{undirectedmiddlemen} holds. Weak middlemen exist in directed networks due to the distinction between weakly connected and strongly connected nodes. The distinction collapses in an equivalent undirected network as all nodes are effectively strongly connected. This implies the following.
\begin{corollary} \label{corundirectedmiddleman}
Weak middlemen only exist in directed networks.
\end{corollary}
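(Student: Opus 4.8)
The plan is to prove Corollary~\ref{corundirectedmiddleman} as an immediate contrapositive consequence of Theorem~\ref{undirectedmiddlemen}, since the two statements are logically dual. Theorem~\ref{undirectedmiddlemen} asserts that in an undirected network, \emph{every} middleman is strong; the corollary asserts the converse framing, namely that weak middlemen can occur \emph{only} in (properly) directed networks. So the entire content of the argument is a contrapositive reshuffling, with essentially no new combinatorial work required.

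First I would recall the definitions: by Definition~\ref{strongweakmiddlemen}, a weak middleman $h \in M(D)$ is one for which $D-h$ remains weakly connected (i.e.\ does not split into two or more weakly connected components). By Theorem~\ref{undirectedmiddlemen}, if $D$ is undirected then every $h \in M(D)$ is a strong middleman, so $D-h$ \emph{does} decompose into at least two weakly connected components. I would then argue by contradiction: suppose, for a given undirected network $D$, that $h$ is a weak middleman. Then $h \in M(D)$ and $D-h$ is weakly connected. But Theorem~\ref{undirectedmiddlemen} forces $h$ to be a strong middleman, so $D-h$ contains two or more weakly connected components, contradicting weak connectedness of $D-h$. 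Hence no undirected network admits a weak middleman, and therefore weak middlemen exist only in directed networks that are not reducible to an undirected structure. This establishes the claim.

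The only subtlety worth flagging is the interpretation of the phrase ``only exist in directed networks.'' Since the paper treats undirected networks as the special case of directed networks with all arcs reciprocated (see the footnote in Section~2.1), the statement must be read as: a weak middleman cannot occur when the network is (equivalent to) an undirected network, i.e.\ it requires at least one unreciprocated arc. I would make this reading explicit so that the corollary is not misconstrued as asserting something about a disjoint class of objects. With that clarification, the proof is simply the contrapositive of Theorem~\ref{undirectedmiddlemen} applied pointwise to each middleman.

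I do not anticipate any genuine obstacle here; the hard work has already been carried out in Theorem~\ref{undirectedmiddlemen}. The main thing to be careful about is to avoid circular phrasing and to state clearly which direction of implication is being invoked, namely that \emph{strong} and \emph{weak} are mutually exclusive by Definition~\ref{strongweakmiddlemen}, so that the established fact ``undirected $\Rightarrow$ strong'' immediately precludes ``undirected $\Rightarrow$ weak.'' A one- or two-sentence proof should suffice.
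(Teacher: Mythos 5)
Your proposal is correct and matches the paper's own treatment: the paper derives Corollary~\ref{corundirectedmiddleman} as an immediate consequence of Theorem~\ref{undirectedmiddlemen} (noting that the weak/strong distinction collapses when all arcs are reciprocated), which is exactly your contrapositive argument. Your explicit clarification that ``undirected'' means the special case of a directed network with all arcs reciprocated, and that strong and weak middlemen are mutually exclusive by Definition~\ref{strongweakmiddlemen}, is a faithful spelling-out of what the paper leaves implicit.
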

The distinction between weak and strong middlemen is natural and enhances our understanding of the functionality of directed versus undirected networks. We enhance this understanding further in the following discussion that allows the measurement of middleman control, in which it is shown that weak middlemen can actually be more powerful than strong middlemen.

\section{Contestability in directed networks}

Next we examine the relationship between critical nodes and competition in networks. Based on the model of network competition in \citet{GillesDiam2013}, such competition rests on the ability of actors to circumvent intermediaries in their pursuit of value-generating interaction. Thus, competition in a network is understood as the ability to prevent an intermediary from becoming a middleman.

In directed networks, contestability is modelled as the ability of a group of nodes to service the coverage of an intermediary, given by the product of the nodes' predecessor and successor set. A node is contested by other nodes if this group of nodes can cover all connections facilitated by that node.
\begin{definition} \label{Contested}
Let $D$ be a directed network on node set $N=\{1, \ldots ,n\}$ where $n \geqslant 3$ and let $i \in N$.
\begin{abet}
\item Node $i$ is \textbf{contested} by node set $C_{i} \subset N$ if $i \notin C_i$ and it holds that
\begin{equation} \label{Group Contested}
P_{i}(D) \times S_{i}(D) \subseteq \bigcup_{j \in C_{i}} \left( \, \overline{P}_{j}(D-i) \times S_{j}(D-i) \, \right).
\end{equation}

\item Node $i$ is \textbf{directly contested} by $j \neq i$ if the singleton node set $\{ j \}$ contests $i$ in $D$.

\item Node $i$ is \textbf{uncontested} if there are no sets of nodes that contest $i$.
\end{abet}
\end{definition}
There may exist multiple node sets that contest $i$ in a network $D$. This justifies the introduction of a class $\mathcal{C}_i (D) \subset 2^N$ of such contesting node sets. Furthermore, a \emph{minimal} contesting node set is given by $C_{i}^{*}(D) \in \arg \min \{ \#C_{i} \mid C_{i} \in \mathcal{C}_{i} \}$.
\begin{corollary}
A node $i$ is directly contested by node $j$ in network $D$ if and only if\begin{equation} \label{Directly Contested}
P_{i}(D) \subseteq P_{j} (D - i) \cup \{j\} \quad \mbox{and} \quad S_{i} (D) \subseteq S_{j} (D - i) \cup \{j\}.
\end{equation}
\end{corollary}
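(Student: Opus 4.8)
The plan is to recognise the corollary as nothing more than the singleton specialisation of Definition~\ref{Contested}, since node $i$ is \emph{directly contested} by $j$ precisely when the set $\{j\}$ contests $i$. Substituting $C_i=\{j\}$ into the contestation condition \eqref{Group Contested} collapses the union on the right-hand side into a single Cartesian product, so the object to be characterised is the inclusion $P_i(D)\times S_i(D)\subseteq \overline{P}_j(D-i)\times S_j(D-i)$. Everything then reduces to the elementary set-theoretic fact that, for \emph{nonempty} sets $A$ and $B$, one has $A\times B\subseteq C\times E$ if and only if $A\subseteq C$ and $B\subseteq E$. Applying this with $A=P_i(D)$, $B=S_i(D)$ and the factors of the right-hand product, and recalling $\overline{P}_j(D-i)=P_j(D-i)\cup\{j\}$, delivers the two factor-wise inclusions in \eqref{Directly Contested}.

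To carry out the two directions concretely I would argue as follows. For the forward implication, assume $\{j\}$ contests $i$ and fix any $p\in P_i(D)$; choosing a witness $s\in S_i(D)$ (available by nonemptiness), the pair $(p,s)$ must lie in the product $\overline{P}_j(D-i)\times S_j(D-i)$, whence $p\in\overline{P}_j(D-i)$. As $p$ was arbitrary this gives $P_i(D)\subseteq P_j(D-i)\cup\{j\}$, and a symmetric projection onto the second coordinate yields the successor inclusion. For the converse, given both inclusions of \eqref{Directly Contested} and any $(p,s)\in P_i(D)\times S_i(D)$, the hypotheses place $p$ and $s$ in the respective right-hand factors, so $(p,s)$ lies in the product and \eqref{Group Contested} holds for $C_i=\{j\}$.

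The main obstacle is the nonemptiness hypothesis that licenses the product-splitting equivalence: if $i$ were a source or a sink, one of $P_i(D)$, $S_i(D)$ would be empty, the left-hand product would be vacuously contained, and the clean biconditional would break down. I would therefore restrict attention to the case where $i$ is an \emph{intermediary}, so that both $P_i(D)$ and $S_i(D)$ are nonempty; this is exactly the relevant regime, since by Proposition~\ref{TheoremIntermediary}(i) every middleman is an intermediary, and contestation is only of interest for such nodes. The remaining subtlety is purely bookkeeping about $j$ itself: the terms $\cup\,\{j\}$ appearing in \eqref{Directly Contested} are precisely what absorb the boundary situations in which $j$ is already a direct or indirect predecessor (or successor) of $i$, so that no predecessor or successor of $i$ is lost merely because it coincides with the contesting node $j$. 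Once these two points are handled, the equivalence is immediate from the reduction above.
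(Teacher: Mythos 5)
Your overall route is the right one---and, for what it is worth, the paper supplies no proof at all here (the corollary is stated as immediate from Definition~\ref{Contested}), so the specialisation $C_i=\{j\}$ followed by the product-splitting lemma for nonempty factors is exactly the intended argument. Your forward direction is sound, and your observation that the biconditional needs $i$ to be an intermediary (so that $P_i(D)$ and $S_i(D)$ are nonempty, otherwise the left-hand product is empty and any $\{j\}$ contests $i$ vacuously) is a genuine precision the paper leaves implicit.

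However, your converse direction has a concrete gap, located precisely where you wave it away as ``purely bookkeeping about $j$ itself.'' Definition~\ref{Contested} is \emph{asymmetric}: its right-hand factor is $\overline{P}_j(D-i)\times S_j(D-i)$, with the bar (the $\cup\,\{j\}$) only on the predecessor side, while the corollary's displayed condition puts $\cup\,\{j\}$ on \emph{both} sides. Your product-splitting lemma therefore yields $P_i(D)\subseteq \overline{P}_j(D-i)$ and the strictly stronger $S_i(D)\subseteq S_j(D-i)$, not $S_i(D)\subseteq S_j(D-i)\cup\{j\}$. In the converse, when you take $(p,s)\in P_i(D)\times S_i(D)$ with $s=j$, the hypothesis only places $s\in S_j(D-i)\cup\{j\}$, which does not put $s$ in the factor $S_j(D-i)$ unless $j$ lies on a cycle in $D-i$. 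Concretely: let $N=\{1,2,3\}$, $D=\{(1,2),(2,3),(1,3)\}$, $i=2$, $j=3$. Then $P_2(D)=\{1\}\subseteq P_3(D-2)\cup\{3\}=\{1,3\}$ and $S_2(D)=\{3\}\subseteq S_3(D-2)\cup\{3\}=\{3\}$, so condition (\ref{Directly Contested}) holds; yet $P_2(D)\times S_2(D)=\{(1,3)\}$ while $\overline{P}_3(D-2)\times S_3(D-2)=\{1,3\}\times\varnothing=\varnothing$, so $\{3\}$ does \emph{not} contest $2$ under (\ref{Group Contested}). This is really an imprecision in the paper's own statement (the printed corollary is not literally equivalent to the printed definition), but your proof cannot claim both directions as written: you should either prove the equivalence with the successor inclusion stated as $S_i(D)\subseteq S_j(D-i)$, or restrict to $j\notin S_i(D)$ (where the two forms coincide), or flag the discrepancy and argue that the definition should symmetrically use $\overline{S}_j(D-i)$---the notation $\overline{S}_i(D)=S_i(D)\cup\{i\}$ is already available in the paper.
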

The corollary states the explicit nature of contestation in a network in that a node can completely take over the functionality of the contested node. Thus, node $i$ is directly contested by node $j$ only when all of node $i$'s predecessor set can be connected to $i$'s successor set either through or from node $j$ when $i$ is removed from the network. The exact same intuition is used with respect to contestation by a group of nodes.
\begin{example} \label{Simple Contestability}
We consider a network to illustrate the notion of contestability. Consider directed network $D$ on node set $N = \{1,2,3,4,5,6,7\}$ shown in Figure~\ref{weakmm} on page \pageref{weakmm}. Table 1 below provides the predecessor and successor sets of all nodes in the network.

\begin{table}[h]
\begin{center}
\label{network1stats}
\begin{tabu}{ l l l }

\tabucline[2pt]{-}
Node & Predecessor Set                 & Successor Set                     \\ \hline
1    & $P_{1}(D)=\varnothing$          & $S_{1}(D)=\{2,3,4,5,6,7\}$        \\
2    & $P_{2}(D)=\{1\}$                & $S_{2}(D)=\{4,5,6,7\}$            \\
3    & $P_{3}(D)=\{1\}$                & $S_{3}(D)=\{5,6,7\}$              \\
4    & $P_{4}(D)=\{1,2\}$              & $S_{4}(D)=\{6,7\}$                \\
5    & $P_{5}(D)=\{1,2,3\}$            & $S_{5}(D)=\{6,7\}$                \\
6    & $P_{6}(D)=\{1,2,3,4,5\}$        & $S_{6}(D)=\{7\}$                  \\
7    & $P_{7}(D)=\{1,2,3,4,5,6\}$      & $S_{7}(D)=\varnothing$            \\ \hline
\end{tabu}\par
\caption{Predecessor and successor sets of nodes in Figure~\ref{weakmm}}
\end{center}
\end{table}

\noindent
Using this information we deduce that intermediaries 3 and 4 are contested, whereas intermediaries 2, 5, and 6 are uncontested.
\\
Here, node 3 is contested by node 2: The predecessor set of node 3 is given by $P_{3}(D) = \{1\} \equiv P_{2}(D)$ and the successor set of node 3 is given by $S_{3}(D) = \{5,6,7\} \subset S_{2}(D) = \{4,5,6,7\}$. It is also true that $P_{3}(D) \subseteq P_{2}(D-3) \cup \{2\} = \{ 1,2 \}$ and $S_{3}(D) \subseteq S_{2}(D-3) \cup \{2\} = \{ 2,4,5,6,7 \}$.
\\
This case introduces what can be denoted as \textit{asymmetric contestability}, meaning that although node $i$ contests node $j$, it may not be true that node $j$ contests node $i$. Here, node 2 directly contests by node 3, although node 2 is not directly contested by node 3. Only in rare cases will there exist \textit{symmetric contestability} where node $i$ contests node $j$ and node $j$ contests node $i$.
\end{example}
The next example highlights group contestability where a highly connected node asymmetrically contests two others, while these two nodes in turn contest the highly connected node.

\begin{example} \label{Group Contestability}
Consider directed network $D'$ on node set $N = \{1,2,3,4,5,6\}$, shown in Figure~\ref{Complex Contestability}, where $M(D') = \varnothing$. Here, node $4$ connects nodes $1$ and $2$ to nodes $5$ and $6$, and therefore directly contests node $2$ while not being directly contested by any other individual node. However, node $4$ is not a middleman and indeed the set $ C= \{ 2,3 \}$ contests node $4$.

\begin{figure}[h]
\begin{center}
\begin{tikzpicture}[scale=0.5]
\draw[thick, ->] (5,13) -- (9.5,5.6);
\draw[thick, ->] (5,13) -- (5,2.9);
\draw[thick, ->] (10,10) -- (0.8,10);
\draw[thick, ->] (10,10) -- (0.8,5.5);
\draw[thick, ->] (10,10) -- (5.6,2.6);
\draw[thick, ->] (10,5) -- (0.8,5);
\draw[thick, ->] (10,5) -- (0.8,9.7);
\draw[thick, ->] (5,2) -- (0.7,4.5);
\draw[thick, ->] (5,2) -- (0.5,9.3);

\draw (5,13) node[draw,circle,fill=yellow!40] {$1$};
\draw (10,10) node[draw,circle,fill=yellow!40] {$2$};
\draw (10,5) node[draw,circle,fill=yellow!40] {$3$};
\draw (5,2) node[draw,circle,fill=yellow!40] {$4$};
\draw (0,5) node[draw,circle,fill=yellow!40] {$5$};
\draw (0,10) node[draw,circle,fill=yellow!40] {$6$};

\end{tikzpicture} 
\caption{Network $D'$ where node set $C = \{ 2,3 \}$ contests node $4$.}
\label{Complex Contestability}
\end{center}
\end{figure}
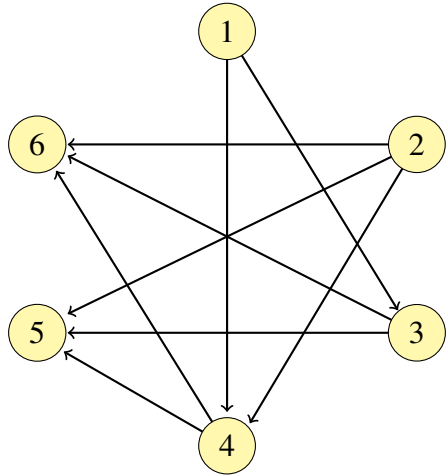

\noindent
Clearly, the coverage and reach of nodes $2$ and $3$ encapsulate the coverage of node $4$. Therefore, although nodes $2$ and $3$ do not contest node $4$ individually, the node set $C = \{ 2,3 \}$ contests node $4$. Indeed, the condition for group contestability holds:
\begin{equation}
P_{4}(D) \times S_{4}(D) \subset \left( \, \overline{P}_{2}(D-4) \times S_{2}(D-4) \, \right) \cup \left( \, \overline{P}_{3}(D-4) \times S_{3}(D-4) \, \right).
\end{equation}
If node $4$ is removed from the network its function can be fully replaced by the combination of nodes $2$ and $3$ and therefore all other nodes that were connected can still be connected in the same way.
\end{example}
Example~\ref{Group Contestability} highlights the requirement for $\overline{P}_{j}(D-i)$ used in the definition of contestability instead of the predecessor set, $P_{j}(D-i)$, only. Consider the network in Figure~\ref{Complex Contestability}. As noted, there exist no middlemen and all intermediaries are contested given the definition of contestability provided above. With a more restricted definition it would be seen that node $4$ would not be a contested intermediary and also not a middleman. The more elaborative version provided in Definition~\ref{Contested} adjusts for predecessors of a given node, $i$, that can connect to the successors of $i$, therefore fulfilling the same function and thus contesting $i$. 

Examples~\ref{Simple Contestability} and~\ref{Group Contestability} give an indication that if an intermediary is contested, it cannot be a middleman. For example, in Figure~\ref{weakmm} agent $3$ is a non-middleman because his function is directly contested by the presence of node $2$, and node $2$ is a middleman because its function is not contested by any other node in the network.

Our main result states that there is a duality between contestation and the presence of critical nodes.
\begin{theorem} \label{duality}
Consider any directed network $(N,D)$ with $n \geqslant 3$. Then:
\begin{numm}
\item Every middleman $i \in M(D)$ is uncontested.

\item If an intermediary $i \in N$ is uncontested, then it must be a middleman, i.e., $i \in M(D)$.
\end{numm}
\end{theorem}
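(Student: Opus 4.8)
The plan is to establish the two implications separately: statement (i) by direct contradiction, and statement (ii) by contraposition, showing that every non-middleman intermediary is contested. In both directions the workhorse is the concatenation of walks, and the decisive technical feature is the use of the \emph{modified} predecessor set $\overline{P}_j(D-i) = P_j(D-i) \cup \{j\}$ (rather than $P_j(D-i)$) in the contestation inclusion of Definition~\ref{Contested}.

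For (i), suppose $i \in M(D)$. By Definition~\ref{middleman} there is a witnessing pair $a,b$ with $i \in M_{ab}(D)$; in particular $\mathcal{W}_{ab}(D) \neq \varnothing$, $a \in P_i(D)$, $b \in S_i(D)$, and every walk from $a$ to $b$ passes through $i$. Assume, for contradiction, that some set $C_i$ contests $i$. Since $(a,b) \in P_i(D) \times S_i(D)$, the contestation inclusion yields a node $j \in C_i$, hence $j \neq i$, with $a \in \overline{P}_j(D-i)$ and $b \in S_j(D-i)$. Thus there is a walk from $a$ to $j$ in $D-i$ (or else $a = j$) together with a walk from $j$ to $b$ in $D-i$; splicing these at $j$ produces a walk from $a$ to $b$ lying entirely in $D-i$, that is, avoiding $i$. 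This contradicts $i \in M_{ab}(D)$, which demands that $i$ lie on every $(a,b)$-walk. Hence no contesting set exists and $i$ is uncontested.

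For (ii), I would prove the contrapositive: if an intermediary $i$ is not a middleman, then $i$ is contested. First fix any pair $(a,b) \in P_i(D) \times S_i(D)$ with $a,b \neq i$. Splicing a walk from $a$ to $i$ with a walk from $i$ to $b$ shows $\mathcal{W}_{ab}(D) \neq \varnothing$, so the hypothesis $i \notin M_{ab}(D)$ forces an $(a,b)$-walk that avoids $i$, giving $b \in S_a(D-i)$. Now propose the contesting set $C_i = P_i(D)$. For the pair $(a,b)$ take $j = a \in C_i$: then $a \in \overline{P}_a(D-i)$ trivially, since $a \in \{a\}$, and $b \in S_a(D-i)$ by the previous step, so $(a,b) \in \overline{P}_a(D-i) \times S_a(D-i)$. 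Ranging over all coverage pairs establishes the inclusion of Definition~\ref{Contested}, so $C_i$ contests $i$. This is precisely where using $\overline{P}$ in place of $P$ is indispensable, echoing the remark after Example~\ref{Group Contestability}.

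The step I expect to be the main obstacle is the careful bookkeeping in (ii), specifically guaranteeing that reachability between coverage pairs survives the deletion of $i$ and handling nodes on cycles through $i$. If $i$ lies on a cycle then $i \in P_i(D) \cap S_i(D)$, so the literal product $P_i(D) \times S_i(D)$ contains pairs involving $i$ itself, which no term $\overline{P}_j(D-i) \times S_j(D-i)$ can cover because $i$ is absent from $D-i$; the coverage must therefore be read over pairs of nodes distinct from $i$, consistent with its description as ``all pairs of nodes that $i$ intermediates'', and I would state this convention explicitly before invoking it. A secondary point is the walk-length convention $m \geqslant 3$: I would confirm at the outset that $P_i(D)$ and $S_i(D)$ function as reachability sets, so that single-arc connections and concatenations are treated uniformly and each constructed object is genuinely a walk.
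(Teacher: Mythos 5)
Your overall route coincides with the paper's: both directions are proved by rerouting walks through the alleged contesting set, with (i) as a splice-and-contradict argument and (ii) via the observation that a non-middleman's coverage can be serviced without it. But your write-up is substantially more rigorous than the published proof, whose part (ii) merely asserts that an uncontested intermediary ``has to rest on at least one walk that no other nodes rest on'' without constructing anything. Your explicit contesting set $C_i = P_i(D)$ with the choice $j = a$, exploiting $a \in \overline{P}_a(D-i)$, is exactly the missing construction, and your insistence on reading the coverage over pairs of nodes distinct from $i$ repairs a real defect in Definition~\ref{Contested} that the paper silently ignores (without it, any intermediary on a cycle would be vacuously uncontested and part (ii) would be false).

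There is, however, one residual gap in your part (ii): diagonal pairs $(a,a)$ with $a \neq i$ and $a \in P_i(D) \cap S_i(D)$, which arise precisely when $i$ lies on a cycle through $a$. Your convention excludes pairs involving $i$ but retains these, and for them the step ``$i \notin M_{ab}(D)$ forces an $(a,b)$-walk avoiding $i$'' is vacuous, since $M_{aa}(D)$ is undefined (Definition~\ref{middleman} requires three distinct nodes), and no such return walk need exist. The case is not harmless. Take $N = \{1,2,3\}$ and $D = \{(1,2),\,(2,1),\,(1,3),\,(2,3)\}$: node $2$ is an intermediary and a non-middleman (every walk from $1$ to $3$ is contested by the direct arc), yet $(1,1) \in P_2(D) \times S_2(D)$ cannot lie in any $\overline{P}_j(D-2) \times S_j(D-2)$ because $1 \notin S_j(D-2)$ for every $j$; so under your stated convention node $2$ is uncontested, contradicting (ii). The repair is the natural strengthening of the convention you already introduced: read the coverage as pairs of \emph{distinct} nodes $a \neq b$, both distinct from $i$---consistent with ``pairs that $i$ intermediates'' and with the distinctness built into Definition~\ref{middleman}---after which your argument, including the inclusion for $C_i = P_i(D)$ (with $i$ removed from $C_i$ should $i \in P_i(D)$), goes through verbatim. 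The paper's own proof addresses none of these boundary cases, so with this one amendment your proposal is strictly more careful than the original.
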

\begin{proof}
Let $(N,D)$ be a directed network with $n \geqslant 3$.

\smallskip\noindent
\emph{Proof of (i):}
The condition for contestability stated in equation (\ref{Group Contested}) on page \pageref{Group Contested} states that a node $h \in N$ is contested in network $D$ if its coverage---determined by the pairs of nodes that it intermediates---is a subset of the coverage plus the reach of the nodes in $C_{h} \subset N \setminus \{ h \}$.
\\
Now consider an intermediary $i$ in $D$ that is contested by a set of agents, $C_{i} \subset N \setminus \{ i \}$. Since $i$ is contested, it must be true that all of $i$'s predecessors can be connected to all of $i$'s successors by a walk that does not include $i$. Therefore, $i$ cannot be a middleman. This implies the assertion that every middleman is uncontested.

\smallskip\noindent
\emph{Proof of (ii):}
Consider an intermediary $h \in N$ who is uncontested in the network $D$. Then $h$'s coverage is not a subset of the coverage of any set of nodes plus the respective reach of each of these nodes. This implies that $h$ itself has to rest on at least one walk that no other nodes in the network rest on when $h$ is removed from the network. Hence, in the network $D$ there exists at least one pair of nodes, say $i$ to $j$, with $h \in \cap \mathcal{W}_{ij} (D)$ and $\mathcal{W}_{ij}(D-h) = \varnothing$. This implies that $h$ is actually a middleman concerning the walks from $i$ to $j$.
\end{proof}

\medskip\noindent
From this assertion all middlemen are never contested; if a node is contested then all of its intermediation functions can be replaced by the coverage of other nodes. From this, it is contended that a middleman is an intermediary that has a unique function and is, in some way, more effective than non-middlemen with respect to their connectivity and thus coverage in the network.

\paragraph{Relationship with market competition.}

Considering an economic network in which all nodes produce an homogeneous output, the notion of contestability is linked to that of market competition. Traditional market theory contends that one agent competes against another if they produce the same output and subsequently sell this output to the same set of consumers. If two or more agents operate in a given market with access to the same set of suppliers and the same consumers, then all consumers could technically be supplied by another agent if one of the agents were removed from the network, or failed due to some exogenous shock.

Bertrand competition suggests that firms producing an homogenous output will be prone to compete with each other with respect to the price attached to their respective outputs \citep{Edgeworth1889}. Much of the analysis in market competition in economics has derived from this simple concept. Firms operating in this way will have no market power; potentially a price above the long-term marginal cost would be unsustainable since a competitor will be able to undercut it. Our notion of contestability is related to Bertrand competition in the sense that, if a given node is contested, its coverage and functions are contested, meaning that a contested node has no power in the network.\footnote{Note that if all nodes provided a heterogeneous output then the notion of contestability and market competition breaks down. Furthermore, if we were to assume that all agents had highly differentiated information and ideas then individual nodes could technically not contest one another since they would all perform different functions and provide different insights into the network; indeed, all nodes will be unique.}

\section{Measuring Middleman power}
\label{networkpower}

A middleman occupies a critical position in a network since her removal disconnects at least two or more agents and, in the most extreme case, might have the ability to separate the network into multiple disconnected components. Therefore, it seems logical to ask how we can measure this power. After examining established measures, we propose a measure of middleman power based on the disconnections that emerge when middlemen are removed from the network.

\subsection{Betweenness centrality and middlemen}

We first examine whether betweenness centrality could be a tool to assess middleman power. \emph{Betweenness centrality} was proposed independently by \citet{Anthonisse1971} for edges and rephrased by \citet{Freeman1977} for nodes in undirected networks. \citet{White1994} proposed an extension to directed networks.

This measure seems specifically relevant since it explicitly considers the role of a node in connecting other nodes in the network. It may be expected that the betweenness centrality score of a node provides an indication of what nodes are middlemen by having a greater betweenness centrality than non-middlemen in the network.

To define the betweenness centrality measure, let $\pi(hj)$ be the number of shortest walks---or \emph{geodesics}---from node $h$ to node $j$. Furthermore, let $\pi_{i}(kj)$ be the number of geodesics that pass through node $i$. Betweenness centrality is now defined as
\begin{equation} \label{betweennesscentrality}
BC_{i}(D) = \sum_{h,j \neq i \colon \pi(hj) \neq 0} \frac{\pi_{i}(hj)}{\pi(hj)}
\end{equation}
Equation (\ref{betweennesscentrality}) indicates that a middleman $i$ between nodes $k$ and $j$ would always have a high betweenness centrality since by Definition~\ref{middleman} a middleman is on all walks between these two nodes. However, this formulation only considers geodesics. As the next example illustrates, the betweenness centrality of non-middlemen may even exceed the betweenness centrality of middlemen.
\begin{example} \label{ex:BC}
Consider the acyclic directed network $D''$ depicted in Figure~\ref{mmnmm}.

\begin{figure}[h]
\begin{center}
\begin{tikzpicture}[scale=0.5]
\draw[thick, ->] (0,10) -- (4.1,10);
\draw[thick, ->] (0,0) -- (4.1,0);
\draw[thick, ->] (0,5) -- (4.1,5);

\draw[thick, ->] (5,10) -- (9.2,8);
\draw[thick, ->] (5,10) -- (9.2,3);

\draw[thick, ->] (5,5) -- (9.2,7.3);
\draw[thick, ->] (5,5) -- (9.1,2.5);

\draw[thick, ->] (5,0) -- (9.5,6.8);
\draw[thick, ->] (5,0) -- (9.2,2);

\draw[thick, ->] (10,7.5) -- (14,7.5);
\draw[thick, ->] (10,2.5) -- (14,2.5);

\draw[thick, ->] (10,7.5) -- (14,2.7);
\draw[thick, ->] (10,2.5) -- (14,7.4);

\draw (0,0) node[draw,circle,fill=yellow!40] {$3$};
\draw (0,5) node[draw,circle,fill=yellow!40] {$2$};
\draw (0,10) node[draw,circle,fill=yellow!40] {$1$};
\draw (5,0) node[draw,circle,fill=yellow!40] {$6$};
\draw (5,5) node[draw,circle,fill=yellow!40] {$5$};
\draw (5,10) node[draw,circle,fill=yellow!40] {$4$};
\draw (10,7.5) node[draw,circle,fill=yellow!40] {$7$};
\draw (10,2.5) node[draw,circle,fill=yellow!40] {$8$};
\draw (15,7.5) node[draw,circle,fill=yellow!40] {$9$};
\draw (15,2.5) node[draw,circle,fill=yellow!40] {$10$};

\end{tikzpicture}
\end{center}
\caption{The directed network $D''$ considered in Example~\ref{ex:BC}.}
\label{mmnmm}
\end{figure}
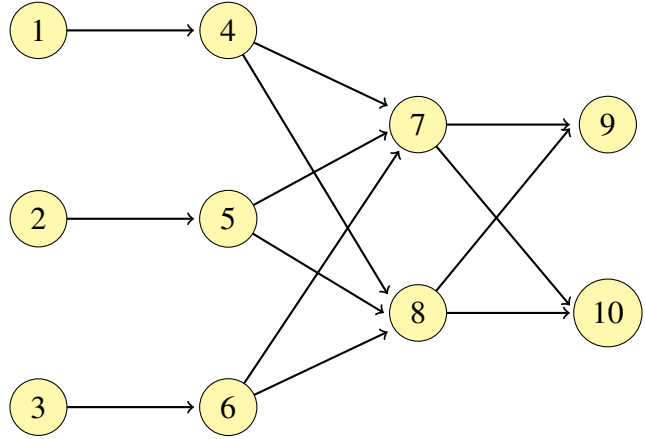

\noindent
Here $M(D'') = \{ 4, 5, 6 \}$, and nodes $7$ and $8$ are contested intermediaries. All middlemen have the same non-normalised betweenness centrality measures due to their equivalent positions: $BC_{4}(D'')=BC_{5}(D'')=BC_{6}(D'')=4$. However, both contested intermediaries have larger non-normalised betweenness centrality scores: $BC_{7}(D'')=BC_{8}(D'')=6$.
\\
In the underlying undirected network, $U''$, where all arcs in $D''$ are reciprocated, the non-middlemen still have a higher betweenness centrality than middlemen: $BC_{4}(U'')=BC_{5}(U'')=BC_{6}(U'')=16.4$ and $BC_{7}(U'')=BC_{8}(U'')=25$. Table 2 provides a comparison of common centrality measures all of which indicate that nodes $7$ and $8$ are more central, therefore underrating the nodes with powerful middleman properties.
\end{example}
This deficiency concerning the identification and measure of middlemen, highlighted in Figure~\ref{mmnmm} and Table 2, extends to other less common centrality measures. Indeed, no measure for undirected networks specifically identifies and highlights middlemen, instead the measures tend to over-inflate the power and importance of contested intermediaries in networks.

The reason for the high betweenness centrality of non-middlemen, for example, is because of the underlying assumptions of the measure. First, only geodesic paths are counted between two given nodes, and the second is that these geodesic paths are considered to have an equal weight. The combination of these assumptions implies that the betweenness centrality measure does not necessarily measure the ``power'' of a node in negotiating between two others.

\begin{table}[t]
\begin{center}
\label{network2stats}
\begin{tabu}{ c c c c c c c }

\tabucline[2pt]{-}
Node & Degree & PageRank& Betweenness& Closeness& Bonacich& $\beta$-Measure\\ \hline
1    & $1$    & $0.112$ & $0.000$    & $0.360$  & $0.328$ & $0.333$\\
2    & $1$    & $0.112$ & $0.000$    & $0.360$  & $0.328$ & $0.333$\\
3    & $1$    & $0.112$ & $0.000$    & $0.360$  & $0.328$ & $0.333$\\
4    & $3$    & $0.330$ & $0.456$    & $0.474$  & $1.047$ & $1.400$\\
5    & $3$    & $0.330$ & $0.456$    & $0.474$  & $1.047$ & $1.400$\\
6    & $3$    & $0.330$ & $0.456$    & $0.474$  & $1.047$ & $1.400$\\
7    & $4$    & $0.480$ & $0.694$    & $0.692$  & $1.565$ & $2.000$\\
8    & $4$    & $0.480$ & $0.694$    & $0.692$  & $1.565$ & $2.000$\\
9    & $2$    & $0.297$ & $0.011$    & $0.474$  & $0.863$ & $0.400$\\
10   & $2$    & $0.297$ & $0.011$    & $0.474$  & $0.863$ & $0.400$\\ \hline
\end{tabu}\par
\caption{Centrality results for the undirected network $U''$}
\end{center}
\end{table}

\subsection{A middleman power measure}

We provide a mechanism to identify middlemen and quantify their brokerage power in a meaningful way from basic information regarding the topology of the network. The proposed measure counts the disconnections that emerge when a given node is removed from the network.

Let $D$ be a directed network on node set $N = \{1, \ldots ,n\}$ and let $i \in N$ be an arbitrary node. The \emph{brokerage} of node $i$ is quantified as
\begin{equation} \label{brokerage}
b_{i}(D) = \sum_{j \in N} \# S_{j}(D) - \sum_{j \neq i} \# S_{j}(D - i) - \#S_{i}(D) - \#P_{i}(D) .
\end{equation}
Note that the successor set of a node contains all other nodes that can be reached by a directed walk from the node. The first part of Equation \ref{brokerage}, $\sum_{j \in N} \# S_{j}(D)$, counts the total number of successors of all $n$ nodes in the network. Hence, it provides an indication of the total connectivity of the network as a whole.

Given the intuition of the first part of the equation, it can be implied that the second part, $\sum_{j \neq i} \# S_{j}(D - i)$, refers to the total connectivity of the network when node $i$ is removed. We remark that $\sum_{j \in N} \# S_{j}(D) > \sum_{j \neq i} \# S_{j}(D - i)$ if $d_{i}(D) \geqslant 1$, and $\sum_{j \in N} \# S_{j}(D) = \sum_{j \neq i} \# S_{j}(D - i)$ if $d_{i} = 0$.

Thus, $\sum_{j \in N} \# S_{j}(D) - \sum_{j \neq i} \# S_{j}(D - i)$ expresses the \emph{connectivity differential} between the full network $D$ and the subnetwork $D-i$. The connectivity differential captures two features: (1) The direct connectivity of node $i$ in terms of its successor and predecessor set. Indeed, the larger the predecessor and successor sets of node $i$ the larger the differential will be regardless of whether $i$ is a middleman or not. (2) The lost connectivity to other nodes not including $i$.

When considering the impact of a middleman, we are only interested in the lost connectivity (2) and, therefore, we compensate the connectivity differential with the connectivity of $i$ in the network. Specifically, the predecessor set and successor set of node $i$ is removed from the connectivity differential, thus adjusting for (1) and resulting in Equation~\ref{brokerage} above. The brokerage of a node therefore counts the number of third-party disconnections that occur due to the removal of a node; or in other words, counts the number of $(i,j)$-middleman sets that a node is a member of.

If $b_{i}(D) = 0$ then the removal of $i$ from the network makes no change to the network's connectivity---when compensating for the connectivity of $i$. Hence, all nodes that are connected by a directed walk in $D$ can still be connected in $D-i$. In essence, compensating for their connection to $i$ in $D$, the number of successors of all $j$ nodes is the same in $D-i$ as in $D$.

On the other hand, if $b_{i}(D) \geqslant 1$, there exists at least one pair of connected nodes that are now not connected in $D-i$, and $i$ must be a middleman.

\medskip\noindent
We normalise the brokerage of a node by calculating the total number of potential opportunities for brokerage in the network. Brokerage---and, therefore, middleman positions---can only emerge if a pair of nodes are a minimum distance of two or more away from each other. Intuitively, by calculating the indirect successors of all nodes in the network, the total number of brokerage opportunities can be derived.

The set of indirect successors of $i$ in $D$ is given by $S_{i}(D) \setminus s_{i}(D)$. Therefore, the number indirect successors for node $i$ is given by $\# S_{i}(D) - \# s_{i}(D)$. Given this, the maximal potential brokerage in $D$ is computed as
\begin{equation} \label{normalisation}
B'(D) = \sum_{i \in N} \left[ \# S_{i}(D) - \# s_{i}(D) \right] .
\end{equation}
By normalising a node's brokerage score, the \emph{middleman power} of a node can be defined.
\begin{definition} \label{middlemanpower}
Let $D$ be a directed network on node set $N$. The \textbf{middleman power} of node $i \in N$ is given as
\begin{equation} \label{mmpowerindex}
\nu_{i}(D) = \frac{b_{i}(D)}{B(D)} ,
\end{equation}
where $B(D) = \max \{ B'(D) , 1\}$.
\end{definition}
Empty and complete networks have to be assigned an artificial normalisation factor of unity as they would otherwise have no brokerage opportunities. 

A middleman has a network power of 1 if it brokers all potential opportunities in the network. This includes nodes at the center of star networks; however, even if some of the leaf nodes form a connection between each other, the middleman power of the centre node remains 1.

The next example explicitly computes the middleman power for an undirected star and a directed cycle.

\begin{example} \label{starcycle}
Figure~\ref{star} shows an undirected star, $D^{\star}$, and a directed cycle, $D^{\circ}$, where $n=6$ in both networks. For an arbitrary undirected star network, $D^{\star}$, where $n \geqslant 3$, $b_{i}(D^{\star}) = (n-1)(n-2)$ for the centre node, and $b_{j}(D^{\star})=0$ for all other nodes. The potential brokerage for an undirected star is computed as $B(D^{\star}) = (n-1)(n-2)$. Therefore, the middleman power of the centre node is $\nu_{i}(D^{\star}) = \frac{(n-1)(n-2)}{(n-1)(n-2)} = 1$.

\begin{figure}[h]
\begin{center}
\begin{tikzpicture}[scale=0.5]
\draw[thick] (4,3.5) -- (1.5,0);
\draw[thick] (4,3.5) -- (6.5,0);
\draw[thick] (4,3.5) -- (0,5);
\draw[thick] (4,3.5) -- (8,5);
\draw[thick] (4,3.5) -- (4,8);

\draw (1.5,0) node[draw,circle,fill=yellow!40] {$4$};
\draw (6.5,0) node[draw,circle,fill=yellow!40] {$3$};
\draw (4,3.5) node[draw,circle,fill=yellow!40] {$6$};
\draw (0,5) node[draw,circle,fill=yellow!40] {$2$};
\draw (8,5) node[draw,circle,fill=yellow!40] {$5$};
\draw (4,8) node[draw,circle,fill=yellow!40] {$1$};

\draw[thick, ->] (15,8) -- (12.4,6.2);
\draw[thick, ->] (12,5.5) -- (12,3.25);
\draw[thick, ->] (12.5,2.5) -- (14.3,0.45);
\draw[thick, ->] (15,0) -- (17.5,1.9);
\draw[thick, ->] (18,2.5) -- (18,4.7);
\draw[thick, ->] (18,5.5) -- (15.7,7.6);

\draw (12,2.5) node[draw,circle,fill=yellow!40] {$4$};
\draw (18,2.5) node[draw,circle,fill=yellow!40] {$5$};
\draw (15,0) node[draw,circle,fill=yellow!40] {$6$};
\draw (12,5.5) node[draw,circle,fill=yellow!40] {$2$};
\draw (18,5.5) node[draw,circle,fill=yellow!40] {$3$};
\draw (15,8) node[draw,circle,fill=yellow!40] {$1$};

\end{tikzpicture}
\end{center}
\caption{Undirected network ($D^{\star}$) and directed cycle ($D^{\circ}$).}
\label{star}
\end{figure}
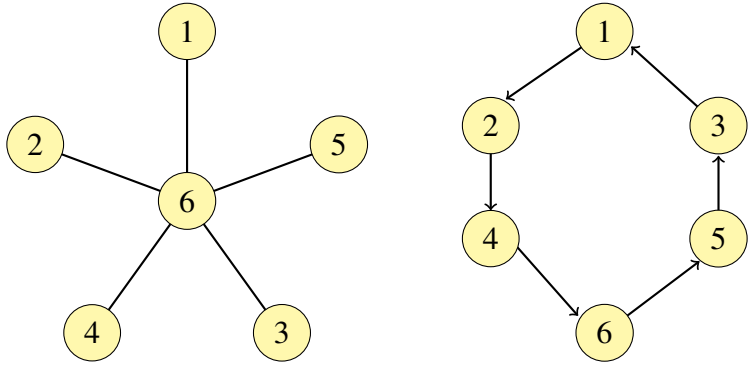

\noindent
Next, consider a directed cycle $D^{\circ}$ for arbitrary $n\geqslant 3$. Each node has an in-degree of 1 and an out-degree of 1, implying that all nodes are middlemen. We now derive that $b_{1}(D^{\circ}) = \ldots = b_{n}(D^{\circ}) = \frac{(n-1)(n-2)}{2}$. The number of potential brokerage opportunities is $B (D^{\circ}) = n(n-2)$, implying $\nu_i = \frac{n-1}{2n}$ for all $i \in N$ where $n \geqslant 3$. In particular, for the depicted case of $n=6 \colon \nu_i = \frac{5}{12}$ $\forall i \in N$.
\end{example}
We derive several properties for the middleman power measure stated in (\ref{mmpowerindex}).
\begin{theorem} \label{middlemanpowert}
Let $D$ be a network on node set $N=\{1, \ldots ,n\}$.
\begin{abet}
\item[(i)] For any contested intermediary, $\nu_{i}(D) = 0$.
\item[(ii)] For any network $0 \leqslant \nu_{i} \leqslant 1$ for all $i \in N$.
\end{abet}
\end{theorem}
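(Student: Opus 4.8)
The plan is to reduce both parts to a single combinatorial identity for the brokerage: for every node $i$,
\begin{equation}
b_i(D) = \# \{ (j,k) \mid j \neq k, \ j,k \neq i, \ i \in M_{jk}(D) \} ,
\end{equation}
so that $b_i(D)$ literally counts the ordered pairs of distinct third parties for which $i$ is a middleman. Once this identity is in hand, part (i) and the non-negativity in part (ii) are immediate, and the upper bound in part (ii) reduces to a counting comparison.

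To establish the identity I would first peel off the $j=i$ summand, using $\sum_{j \in N} \# S_j(D) - \# S_i(D) = \sum_{j \neq i} \# S_j(D)$, to rewrite the brokerage as
\begin{equation}
b_i(D) = \sum_{j \neq i} \left[ \# S_j(D) - \# S_j(D-i) \right] - \# P_i(D) .
\end{equation}
Because any walk surviving in $D-i$ is already a walk in $D$, we have $S_j(D-i) \subseteq S_j(D)$, so each bracket equals $\# \left( S_j(D) \setminus S_j(D-i) \right)$. A node $k \in S_j(D)$ leaves $S_j(D-i)$ in exactly one of two disjoint ways: either $k=i$, which happens precisely when $j \in P_i(D)$; or $k \neq i$ and every $(j,k)$-walk runs through $i$, i.e.\ $i \in M_{jk}(D)$. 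Summing the first case over $j \neq i$ reproduces $\# P_i(D)$ (since $i \notin P_i(D)$), which cancels the trailing term, and summing the second case yields the right-hand side of the identity. Verifying that these two cases are disjoint and exhaustive is the delicate part of the argument, and is where I would concentrate the effort.

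For part (i), Theorem~\ref{duality}(i) states that every middleman is uncontested; contrapositively a contested intermediary is not a middleman, hence lies in no $M_{jk}(D)$. The identity then gives $b_i(D)=0$, and since $B(D) = \max\{B'(D),1\} \geq 1$ we conclude $\nu_i(D) = b_i(D)/B(D) = 0$. The lower bound $\nu_i(D) \geq 0$ of part (ii) is now automatic, the identity exhibiting $b_i(D)$ as a cardinality and $B(D)$ as a positive quantity.

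For the upper bound I would compare against $B'(D) = \sum_{j \in N} [ \# S_j(D) - \# s_j(D) ]$, which counts the ordered pairs $(j,k)$ with $k$ an indirect successor of $j$. The key observation is that every pair counted by $b_i(D)$ is of this form: if $i \in M_{jk}(D)$ then $k \in S_j(D)$, while Proposition~\ref{TheoremIntermediary}(iv) rules out $k \in s_j(D)$, because a direct arc $jk$ would furnish a $(j,k)$-walk bypassing $i$. Thus the middleman pairs for $i$ inject into the indirect-successor pairs, giving $b_i(D) \leq B'(D)$. If $B'(D) \geq 1$ this is $b_i(D) \leq B'(D) = B(D)$; if $B'(D)=0$ there are no indirect successors at all, forcing $b_i(D)=0 \leq 1 = B(D)$. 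Either way $\nu_i(D) \leq 1$. I expect the combinatorial identity of the first step to be the only genuine obstacle, with everything downstream being bookkeeping and appeals to the duality theorem and Proposition~\ref{TheoremIntermediary}.
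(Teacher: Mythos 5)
Your proof is correct, and it takes a genuinely different---and more complete---route than the paper. For part (i) the paper argues directly by connectivity accounting: by Theorem~\ref{duality} a contested intermediary $k$ is a non-middleman, so removing $k$ disrupts only connections involving $k$ itself, whence $\sum_{j \in N} \# S_{j}(D) - \sum_{j \neq k} \# S_{j}(D-k) = \# S_{k}(D) + \# P_{k}(D)$ and $b_{k}(D) = 0$; for part (ii) the paper explicitly \emph{omits} a mathematical proof ``due to its tedious nature'' and offers only the informal observations that $b_{i}(D) \leqslant B'(D)$ and that neither quantity can be negative. You instead establish the exact counting identity $b_{i}(D) = \# \{ (j,k) \mid j,k \neq i, \ i \in M_{jk}(D) \}$---a fact the paper asserts only in prose after Equation~\ref{brokerage} (``counts the number of $(i,j)$-middleman sets that a node is a member of'') but never proves---and derive everything from it: part (i) becomes immediate from the duality theorem, non-negativity is automatic since $b_i(D)$ is exhibited as a cardinality, and your injection of middleman pairs into indirect-successor pairs via Proposition~\ref{TheoremIntermediary}(iv), together with the degenerate case $B'(D)=0$ where the artificial normalisation $B(D)=1$ applies, supplies precisely the rigorous upper-bound argument the paper skips. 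What your approach buys is a single lemma from which both parts follow mechanically; what the paper's approach buys is brevity for (i) at the cost of leaving (ii) unproven. Two small caveats, both traceable to ambiguities in the paper's definitions rather than to gaps in your reasoning: the cancellation step producing $\# P_{i}(D)$ requires $i \notin P_{i}(D)$, which in cyclic networks holds only under the paper's implicit convention that closed walks do not place a node in its own successor or predecessor set (a convention confirmed by the computation $B(D^{\circ}) = n(n-2)$ in Example~\ref{starcycle}); and your appeal to Proposition~\ref{TheoremIntermediary}(iv) to rule out $k \in s_{j}(D)$ presupposes that a direct arc counts as a $(j,k)$-walk, which conflicts with the stated requirement $m \geqslant 3$ in the paper's definition of a walk but is clearly the intended reading (the successor sets in Table 1 include direct successors). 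Making these two conventions explicit would render your proof airtight.
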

\begin{proof}
\ \ \\
\emph{Proof of (i):}
Theorem~\ref{duality} asserts a duality between a contested intermediary and a non-middleman. Definition~\ref{middleman} implies that an intermediary $k$ is not a middleman if there is no pair $i,j \in N$ with $i \neq j$ such that $k$ lies on all walks from $i$ to $j$ in $D$. Hence, $k \notin \cap \mathcal{W}_{i,j}(D)$ for all $i,j \in N$ with $i \neq j$.
\\
Since $k$ is an intermediary it holds that $\# P_{k}(D) > 0$, $\# S_{k}(D) > 0$ as well as $\sum_{i \in N} \# S_{i}(D) > \sum_{i \in N-k} \# S_{i}(D-k)$ since the nodes in $D-k$ can obviously not connect to $k$. Also, since the connectivity of the network $D$ is not affected by the removal of node $k$, it holds that the removal of $k$ only affects the connectivity with $k$ itself. Hence, $\sum_{i \in N} \# S_{i}(D) - \sum_{i \in N-k} \# S_{i}(D-k) = \# S_{k}(D) + \# P_{k}(D)$. Therefore, $b_k (D) =0$, implying that $\nu_{k} = 0$.
\\[1.5ex]
\emph{Proof of (ii):}
We omit a mathematical proof of (ii), due to its tedious nature. Instead, we provide an intuitive, more descriptive reasoning. 
\\
A middleman cannot take advantage of more than all brokerage opportunities present in a network; therefore $B'(D) \geqslant b_{i}(D)$, implying $\nu_{i}(D) \leqslant 1$.\footnote{Only in a network where a middleman rests on all geodesic walks of length two, for example an undirected star, it holds that $B'(D) = b_{i}(D)$.}
\\
Furthermore, neither $b_{i}(D) < 0$ nor $B(D) < 0$. The minimum brokerage of some node $k$ is in an empty network where $\# S_{k}(D) = \# P_{k}(D) = 0$. In that case, $\sum_{i \in N} \# S_{i}(D) = \sum_{i \neq k} \# S_{i}(D-k)$ since $k$ has no connectivity in the network. Therefore, $\nu_i (D) \geqslant 0$ for any node $i \in N$.
\end{proof}

\medskip\noindent
For large directed networks the calculation of the middleman power measure of a node can be tedious. The appendix to this paper contains a two-step algorithm based on the adjacency matrix of a directed network to identify middlemen and compute the middleman power measure of each node in an arbitrary directed network.

\subsubsection*{Distance-based brokerage}

We extend our discussion of the criticality of nodes with a measure that combines middleman power with node proximity. Consider a directed network $D$ on $N$ and $i,j,h \in N$ with $h \in M_{ij}(D)$. The brokerage power of middleman $h$ could be less effective due to the geodesic distance from $i$ to $j$.

Consider an amended brokerage score to capture this effect given by
\begin{equation}
\Delta_{ij}(h) = \frac{1}{\delta_{ih}} \cdot \frac{1}{\delta_{hj}},
\end{equation}
where $\delta_{ij}$ is the geodesic distance from $i$ to $j$ in $D$. Here, nodes closer to $h$ provide a greater brokerage power to node $h$ than those at larger distances. Indeed, $h$ receives maximal brokerage power if $i \in p_{h}(D)$ and $j \in s_{h}(D)$.

Now a \emph{distance based middleman power measure} for $h \in M(D)$ can be introduced as
\begin{equation}
\nu^{\ast}_h (D) = \sum_{i,j \in N \colon h \in M_{ij} (D)} \Delta_{ij} (h) .
\end{equation}
Reassessing the directed network in Figure~\ref{weakmm}, the distance-based middleman power measure provides a convergence of the middleman power scores for the nodes: $\nu^{\ast}_{6}(D) = 3 \frac{1}{3}$; $\nu^{\ast}_{5}(D) = 1 \frac{1}{2}$; $\nu^{\ast}_{2}(D) = 1$; $\nu^{\ast}_{1}(D) = \nu^{\ast}_{3}(D) = \nu^{\ast}_{4}(D) = \nu^{\ast}_{7}(D) = 0$.

It may be particularly beneficial to use this modified measure to assess costly trade in a network or the diffusion of information that can degrade as it is being passed through a network. This assumption of information degradation and even complete truncation over a certain distance has been widely used in literature regarding social networks \citep{JacksonRogers2005, Jackson2008}.

\medskip \noindent
We note that the middleman power measure should not be considered as a replacement for other centrality measures---it is itself not just a measure of centrality---rather it identifies a certain type of node in a network and measures brokerage. Instead, the measure can be complimented with other measures of centrality. The above example of distance-based brokerage is one of many augmentations that makes use of closeness centrality.

\section{An application to two empirical networks}

In this section we apply our two middleman power measures to two well-known social networks. From the assessment of these networks we provide a discussion regarding the potential of middlemen in these networks. The results of middleman power are compared with other measures of centrality. This is done in terms of reference; we refrain from correlating the results of these measures because we showed above that middleman power measures different aspects of a node than other measures.

\subsection{The elite Florentine marriage network}

The investigation of the marriage network of Renaissance Florence has been extensively used to assess many centrality measures \citep{Newman2003betweenness, Borgatti2005}. We apply our analysis of critical nodes to this network as well.

\begin{figure}[t]
\centering
\includegraphics[scale=0.37]{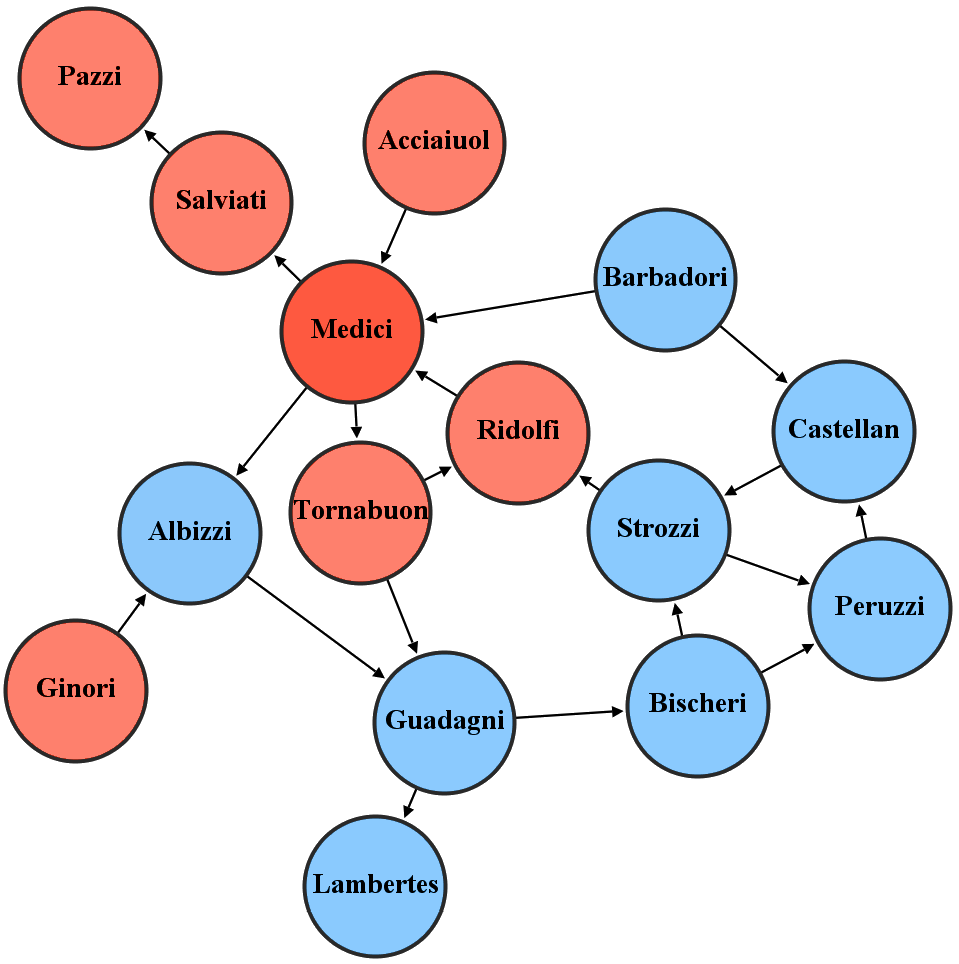}
\caption{Directed network of Florentine marriages, circa 1434}
\label{Fig:FlorentineFamilies}
\end{figure}

Although it has been shown previously that the Medici family have the highest degree, closeness, and betweenness centralities, their importance and power in the socio-economic network---as explained by \citet{deRoover1963}, \citet{Padgett1994} and \citet{Goldthwaite2009}---is derived from their ability to access diverse sources of information between separate factions of the Florentine elite and subsequently broker relationships between families. Powerful brokerage opportunities emerged due to the inherent ``network disjunctures within the elite'' \citep[p.1259]{Padgett1993}. In particular, \citet{Padgett1993} show that Cosimo de'Medici was able to gain access to, and control, the flow of diverse information between opposing political factions and also between families in the same faction. Thus, a middleman position allowed the Medici family to attain power within Florentine society; especially, the Medici's ability to act as broker between a large number of families crossing opposing political factions.

Unlike more recent renditions of the Florentine marriage network, we represent this network as a directed graph: An arc from family $i$ to family $j$ represents a female from family $i$ married into family $j$. The intuition for this representation is that families were strategically married in order to build trust, inherit wealth, property, and business, as well as influence the political and economic decisions of other families. Information will have flowed through these relationships. Moreover, the nodes are coloured depending on the factions that the families were affiliated: red nodes are families affiliated with the Medician faction, and blue nodes are families affiliated with the opposing Oligarch faction \footnote{The data was initially gathered by \citet{Kent1978} and a blockmodel network was constructed by \citet{Padgett1993}. The network provided in Figure~\ref{Fig:FlorentineFamilies} is derived from these studies. Both provide an extremely rich analysis of the elite in Florence at this time.}.

We analyse the resulting directed network in Figure~\ref{Fig:FlorentineFamilies} using in-degree ($d^-$) and out-degree ($d^+$), \citet{Bonacich1987} eigenvector centrality with a $\beta$ of $0.4$ ($E$), betweenness centrality ($BC$), raw middleman power ($\nu$) and distance-based middleman power ($\nu^*$) presented in Table 3. In this table, (*) denotes a weak middleman and (**) denotes a strong middleman.

\begin{table}
\begin{center}
\label{tabFlorence}
\begin{tabu}{ X  X  X  X  X  l  }
\tabucline[2pt]{-}
Family            & $d^-_i \ (d^+_i)$ & $E_i$    & $BC_i$   & $\nu_i$  & $\nu_{i}^{*}$\\ \hline
Medici(**)        & 3 (3)             & 1.756    & 0.376    & 0.586    & 0.223\\
Salviati(**)      & 1 (1)             & 0.275    & 0.066    & 0.103    & 0.064\\
Pazzi             & 1 (0)             & 0.000    & 0.000    & 0.000    & 0.000\\
Acciaiuol         & 0 (1)             & 0.977    & 0.000    & 0.000    & 0.000\\
Barbadori         & 0 (2)             & 1.549    & 0.000    & 0.000    & 0.000\\
Ridolfi(*)        & 2 (1)             & 0.977    & 0.264    & 0.414    & 0.092\\
Tornabuon         & 1 (2)             & 1.359    & 0.102    & 0.000    & 0.033\\
Albizzi(**)       & 2 (1)             & 0.694    & 0.149    & 0.095    & 0.033\\
Ginori            & 0 (1)             & 0.552    & 0.000    & 0.000    & 0.000\\
Castellan(*)      & 2 (1)             & 0.743    & 0.069    & 0.086    & 0.044\\
Strozzi(*)        & 2 (2)             & 1.169    & 0.245    & 0.371    & 0.112\\
Bischeri(*)       & 1 (2)             & 1.246    & 0.203    & 0.319    & 0.090\\
Peruzzi(*)        & 2 (1)             & 0.571    & 0.049    & 0.077    & 0.022\\
Guadagni(**)      & 2 (2)             & 1.048    & 0.269    & 0.422    & 0.130\\
Lambertes         & 1 (0)             & 0.000    & 0.000    & 0.000    & 0.000\\ \hline
\end{tabu}
\caption{Measuring the importance of Florentine families (1429 -- 1434)}
\end{center}
\end{table}

Assessing the marriage network with our middleman power measure highlights the Medici family as a strong middleman, along with the prominent Albizzi, Salviati, and Guadagni families. Clearly, the Medici family broker substantially more relationships than the next rival family, the Guadagni, and therefore control the flow of information and influence between more families. Moreover, it could be that the Medici would be able to control interactions in the form of marriages and economic trade between factions that may integrate them and subsequently lessen the Medici's critical position in the network.

The betweenness centrality measure ranks the Medici family as the most central in the network. The measure favours middlemen thus typically ranking them higher; however there is an exception: although the Tornabuon family are not middlemen they have a higher betweenness centrality than the Salviati and Peruzzi families.

Unsurprisingly, the Bonacich centrality measure also ranks the Medici family highly, but also ranks many non-middlemen highly; specifically the Barbadori family. From the network topology alone there is no indication to suggest that the Barbadori family should have had a prominent role in the Florenine elite. Although the Medici rank highly with this measure, the relevance of an eigenvector centrality measure is questionable: there is no real reason to believe why the importance of a family would come from its degree and the degree of its neighbours alone.

There are three further points to note. First, if the marriage network were converted into its undirected form, the Medici retains its middleman position maintaining the highest middleman power. Second, the Medici would also retain the highest ranking if the direction of the relationships were inverted. Third, \citet{Padgett1993} and \citet{PadgettMcLean2006} noted that the oligarch faction was less cohesive and more disorganised than the Medician faction. From our analysis it is seen that there are more brokerage opportunities in the oligarchic faction highlighting that they were more disjunctured. We see that $\tfrac{3}{4}$ of the oligarchs are middlemen against $\tfrac{3}{7}$ of the families in the Medician faction.

An additional remark is that the Medici attained control in Florence because they not only brokered relationships between elite families in the network, but because they brokered relationships between two opposing factions. Therefore, they were not only able to gather and intercept information between the factions and use it to their advantage. This is the prerogative of middlemen in social networks; to not only intercept information between agents, but to manipulate and exploit it.

\begin{figure}[t]
\centering
\includegraphics[scale=0.45]{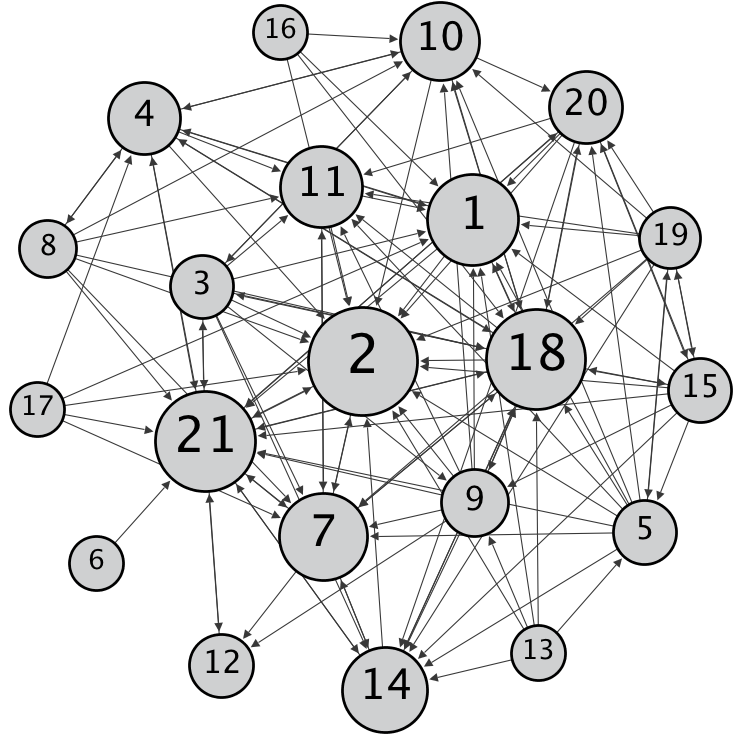}
\caption{Krackhardt's network of advice among managers.}
\label{krackhardtnetwork}
\end{figure}

\subsection{Middlemen in Krackhardt's advice network}

Next we consider an example of a well-known organisational advice network in which \citet{Krackhardt1987} investigated the relationships between managers in a given firm.\footnote{We use data in the ``LAS'' matrix from p.129 in the Krackhardt article as it seems to be the most objective measure.} He considered an organisation of about 100 employees and 21 managers. Krackhardt collected information from the managers about who sought advice from whom, depicted in Figure~\ref{krackhardtnetwork} on page \pageref{krackhardtnetwork}. An arc from $i$ to $j$ denotes that manager $i$ has sought advice from manager $j$; therefore, an arc from $j$ to $i$ denotes that manager $j$ has provided advice to manager $i$. In this depiction the size of the node reflects its in-degree.

\begin{table}[t]
\begin{center}
\label{tabkrackhardt}
\begin{tabu}{ X X X X X l }

\tabucline[2pt]{-}
Manager       & $d^-_i$ ($d^+_i$)& $E_i$    & $BC_i$   & $\nu_i$  & $\nu_{i}^{*}$     \\ \hline
1             & 12 (4)           & 0.068    & 0.035    & 0.000    & 0.000\\
2             & 18 (2)           & 0.306    & 0.011    & 0.000    & 0.000\\
3             & 3 (9)            & 1.271    & 0.018    & 0.000    & 0.000\\
4(*)          & 6 (7)            & 1.001    & 0.071    & 0.090    & 0.034\\
5             & 3 (10)           & 1.463    & 0.009    & 0.000    & 0.000\\
6             & 0 (1)            & 0.172    & 0.000    & 0.000    & 0.000\\
7             & 11 (6)           & 0.776    & 0.048    & 0.000    & 0.000\\
8             & 1 (7)            & 1.013    & 0.001    & 0.000    & 0.000\\
9             & 4 (9)            & 1.171    & 0.011    & 0.000    & 0.000\\
10            & 8 (5)            & 0.820    & 0.018    & 0.000    & 0.000\\
11            & 9 (3)            & 0.344    & 0.004    & 0.000    & 0.000\\
12            & 3 (1)            & 0.172    & 0.000    & 0.000    & 0.000\\
13            & 0 (6)            & 0.938    & 0.000    & 0.000    & 0.000\\
14            & 10 (4)           & 0.625    & 0.002    & 0.000    & 0.000\\
15(*)         & 3 (9)            & 1.265    & 0.092    & 0.161    & 0.064\\
16            & 0 (4)            & 0.580    & 0.000    & 0.000    & 0.000\\
17            & 0 (5)            & 0.673    & 0.000    & 0.000    & 0.000\\
18            & 15 (12)          & 1.745    & 0.231    & 0.000    & 0.000\\
19            & 2 (10)           & 1.493    & 0.002    & 0.000    & 0.000\\
20            & 6 (7)            & 1.028    & 0.028    & 0.000    & 0.000\\
21(**)        & 15 (8)           & 1.348    & 0.176    & 0.147    & 0.051\\ \hline
\end{tabu}
\caption{Influence, centrality, and middlemen in Krackhardt's advice network}
\end{center}
\end{table}

In Table 4, using the same centrality measures as before, we identify two weak middlemen, managers 4 and 15, and one strong middleman, manager 21.\footnote{As before, (*) indicates a weak middleman and (**) indicates a strong middleman.} Middlemen are important for this network for a number of intuitive reasons: First, a middleman can block ideas, advice, and information from being transmitted from one group of managers to another. Second, a middleman can manipulate the information transferred from one group of managers to another.

The weak middleman, Node 15, has the highest brokerage in the organisation controlling a total of 34 relationships. This is also reflected in that \citet{Krackhardt1987} highlighted manager 15 as an important agent in the organisational advice network. However, Node 15 does not have the highest betweenness or Bonacich centralities. Instead, Node 18 is the most prominent in terms of Bonacich and betweenness scores but is not a middleman in the network; instead this may be a function of the node 18's in- and out-degree.

Both Bonacich and betweenness centralities are also seen to be poor indicators for ranking middlemen: Node 15 is the most powerful middleman in the network, but has a Bonacich and betweenness centrality lower than Node 21. The Bonacich influence model does not consider the fact that middlemen are potentially able to exploit their position by using information from others and blocking the transmission of certain information and ideas.


\singlespacing
\bibliographystyle{econometrica}
\bibliography{OSDB}

\newpage

\appendix
\section*{Appendix: Algorithm for calculating middleman power}

We can calculate the network power of a node from the adjacency matrix using a multi-stage algorithm. The first algorithm used is provided below using the MATLAB programming language.\footnote{The devised algorithm is general enough to be configured for any programming language.}

\begin{scriptsize}
\begin{verbatim}
Language: MATLAB
Input: Adjacency Matrix (D)
Output: Raw (NP) and normalised (np) network power of individual nodes
Function: [NP,np] = middleman_power(D)

1   D = 'insert adjacency matrix';
2   n = size(D,1);
3   Z = zeros(n);
4   OUT = cumsum(D,2);
5   D1 = Z;
6   for i=1:n
7     D1 = D1 + logical(D^i);
8     D1=logical(D1);
9   end
10  D1(logical(eye(size(D1)))) = 0;
11  B = sum(transpose(D1));
12  B_all = sum(B);
13  DT = transpose(D);
14  DT1 = Z;
15  for i=1:n
16    DT1 = DT1 + logical(DT^i);
17    DT1=logical(DT1);
18  end
19  DT1(logical(eye(size(DT1)))) = 0;
20  C = sum(transpose(DT1));
21  C_all = sum(C);
22  for i=1:n
23    Ptwo(i,1) = B(i) - OUT(i,n);
24  end
25  Ptwo = max(sum(Ptwo),1);
26  for i=1:n
27    G = D;
28    G(:,i)=0;
29    G(i,:)=0;
30    GG = zeros(n);
31    for j=1:n
32     GG = GG + logical(G^j);
33     GG = logical(GG);
34    end
35   GG(logical(eye(size(GG)))) = 0;
36   B_2 = sum(transpose(GG));
37   B_2_all = sum(B_2);
38   NP(i,1) = B_all - B_2_all - (B(i) + C(i));
39  end
40  np = NP/Ptwo;
\end{verbatim}
\end{scriptsize}

\noindent
The second step is to characterise each node. From Theorem~\ref{undirectedmiddlemen} and Corollary~\ref{corundirectedmiddleman}, the distinction between weak and strong middlemen is easy to make. This allows us to enhance the algorithm for identifying the various classes of critical nodes in a network.

We first analyse the network power of the nodes in the directed network provided using the above algorithm. We then investigate the middleman power measure for the underlying undirected network. If a node has positive middleman power in the original network as well as a positive network power in the underlying network, then the node has to be a strong middleman. On the other hand, if the node has a positive middleman power in the original network, but zero middleman power in the underlying undirected network, that node has to be a weak middleman.

This is shown in the extended algorithm below.

\begin{scriptsize}
\begin{verbatim}
Language: MATLAB
Input: Adjacency matrix, D, and raw network power, NP.
Output: Middleman classification.

1  U = D + transpose(D);
2  U = logical(U);
3  U1 = Z;
4  for i=1:n
5      U1 = U1 + logical(U^i);
6      U1 = logical(U1);
7  end
8  U1(logical(eye(size(U1)))) = 0;
9  F = sum(transpose(U1));
10 F_all = sum(F);
11 for i=1:n
12    G = U;
13    G(:,i)=0;
14    G(i,:)=0;
15    GG = zeros(n);
16    for j=1:n
17        GG = GG + logical(G^j);
18        GG = logical(GG);
19    end
20    GG(logical(eye(size(GG)))) = 0;
21    F_2 = sum(transpose(GG));
22    F_2_all = sum(F_2);
23    NPU(i,1) = F_all - F_2_all - (F(i) + F(i));
24 end
25    for i=1:n
26    if NP(i) == 0 && NPU(i) == 0;
27        fprintf('Node %0.0f is a non-middleman\n',i);
28    elseif NP(i) > 0 && NPU(i) == 0;
29        fprintf('Node %0.0f is a weak middleman\n',i);
30    elseif NP(i) > 0 && NPU(i) > 0;
31        fprintf('Node %0.0f is a strong middleman\n',i);
32    end
33 end
\end{verbatim}
\end{scriptsize}

\noindent

\end{document}